\documentclass[pdflatex,sn-mathphys-num]{sn-jnl}

\usepackage{amsmath,amssymb,amsfonts}
\usepackage{mathtools}
\usepackage{graphicx}
\usepackage{hyperref}
\usepackage{bbm}
\usepackage{bm}
\usepackage{amsthm}
\usepackage{booktabs}
\usepackage{longtable}
\usepackage[title]{appendix}
\makeatletter
\let\orcidlogo\@undefined
\makeatother
\usepackage{orcidlink}

\theoremstyle{thmstyleone}
\newtheorem{theorem}{Theorem}
\newtheorem{corollary}[theorem]{Corollary}
\newtheorem{proposition}[theorem]{Proposition}
\theoremstyle{thmstyletwo}
\newtheorem{remark}{Remark}

\newcommand{\HH}{\mathbb{H}_+}
\newcommand{\Hp}[1]{H^{#1}_+}
\newcommand{\BB}{\mathcal{B}}
\newcommand{\KK}{\mathcal{K}}
\newcommand{\LL}{\mathcal{L}}
\newcommand{\GG}{\mathcal{G}}
\newcommand{\II}{\mathcal{I}}
\newcommand{\PP}{\mathcal{P}}
\newcommand{\opnorm}{\mathrm{op}}
\newcommand{\Proj}{P}
\newcommand{\Qproj}{Q}
\newcommand{\Ktilde}{\tilde{\KK}}

\newcommand{\TrOp}{\mathrm{Tr}}

\begin{document}

\title[Operator-valued Hardy spaces for quantum memory kernels]{Operator-Valued Hardy Spaces and Kramers--Kronig Relations for Non-Markovian Quantum Memory Kernels}

\author*[1,2]{\fnm{Kejun} \sur{Liu}\,\orcidlink{0000-0003-1547-9280}}\email{kjliu@suda.edu.cn}

\affil*[1]{\orgdiv{State Key Laboratory of Bioinspired Interface Material Science, Institute of Nano \& Functional Materials}, \orgname{Soochow University}, \orgaddress{\city{Suzhou}, \postcode{215123}, \country{China}}}

\affil*[2]{\orgdiv{School of Physical Science and Technology}, \orgname{Soochow University}, \orgaddress{\city{Suzhou}, \postcode{215006}, \country{China}}}

\abstract{Retarded support, upper-half-plane holomorphy, and Hardy boundary control are distinct properties of a memory kernel. We give sufficient conditions linking them for the Nakajima--Zwanzig (NZ) kernel of an open system with finite-dimensional system Liouville space. If the projected kernel has an absolutely continuous real-axis representation with density $w\in L^1\cap L^{p_0}$, no singular part, and $p_0>1$, its transform lies in the operator-valued Hardy class $H^p(\mathcal B)$ for $1<p\leq p_0$, yielding principal-value and once-subtracted Kramers--Kronig (KK) boundary formulas. Microscopic unitary evolution gives holomorphy for reduced-state transforms without that spectral hypothesis. A perturbative force-fit equation has a first-order quotient pole at a simple baseline state-transform zero $\zeta$ only when the first-order inhomogeneous numerator satisfies $\tilde{\mathcal I}^{(1)}(\zeta)\neq0$; matrix reconstructions additionally require adjugate non-cancellation. For rational reconstructions, clearing and no-cancellation hypotheses convert a rational kernel pole into a genuine upper-half-plane propagator pole, incompatible with a uniformly bounded completely positive trace-preserving (CPTP) family when the transforms agree on an open set. We formalize these implications in Lean~4, providing machine-checked verification from explicitly named classical inputs. A finite Jaynes--Cummings calculation provides a shifted, distributional KK check with a $0.024\%$ fixed-truncation residual.}

\keywords{Non-Markovian open quantum systems, Kramers--Kronig relations, Hardy spaces, Nakajima--Zwanzig memory kernel, Complete positivity}

\maketitle

\section{Introduction}
\label{sec:intro}

The Nakajima--Zwanzig (NZ) equation replaces the eliminated bath by an operator-valued memory kernel~\cite{nakajima1958,zwanzig1960,mori1965}. This kernel is the object used in many frequency-domain reconstructions, but its analytic properties are often imported without checking the reduction that produced it. The Kramers--Kronig (KK) relations, long established for causal response functions in optics and condensed matter, have to our knowledge not been rigorously developed for the memory kernels of open quantum systems. This extension is nontrivial: a growing causal kernel need not have a transform on all of the upper half-plane $\HH$, and a singular or slowly decaying spectral weight may allow only a distributional or subtracted boundary relation. In this paper we bring the NZ memory kernel into the KK framework; without this step, one may miss obstructions that the causal structure places on reconstructions. In dispersion analyses, three properties must be distinguished: support of the retarded extension on $t\geq0$, holomorphy of its Fourier--Laplace transform in the upper half-plane $\HH$, and boundary regularity sufficient for a KK integral. The zero-past convention gives the first, but the second and third require separate hypotheses. In this paper ``causal'' means the retarded, linear time-invariant structure of the reduced equation and its stated analytic consequences; it does not mean relativistic signal propagation.

The distinction is particularly important for an NZ reduction. A homogeneous linear, time-invariant equation requires a fixed projection/reference pair and a preparation for which the projected initial component vanishes; factorization removes the inhomogeneous term $\II(t)$ but does not by itself prove any boundary estimate. The spectral input is separate: the physical partial-trace projection is generally non-orthogonal, so $Q\mathcal LQ$ need not be self-adjoint even though the full commutator Liouvillian is self-adjoint in the Hilbert--Schmidt inner product. Thus microscopic unitarity does not, by itself, produce a real-axis spectral measure or the integrability needed for KK. Classical causality--dispersion theory~\cite{kubo1957,toll1956,king2009} also requires growth or boundary assumptions, as emphasized in a recent Laplace-domain treatment~\cite{perinelli2024}; exact NZ constructions~\cite{vacchini2010exact} and Gaussian-bath influence-function methods~\cite{ivander2024} address different parts of the open-system problem. Those contributions do not remove the need to identify the spectrum and the coupling-weighted density of the particular projected generator. The question here is which additional, checkable assumptions make the NZ kernel fit that analytic framework, rather than whether a bath label such as Ohmic is sufficient on its own.

Our main result is a sufficient-condition theorem at this NZ interface. Let the system Liouville space be finite-dimensional and assume
\[
  \Ktilde(z)=\int_{\mathbb R}\frac{w(\lambda)}{z-\lambda}\,d\lambda,
  \qquad w\in L^1(\mathbb R;\BB)\cap L^{p_0}(\mathbb R;\BB),
\]
with no singular real-axis part and $p_0>1$. We prove $\Ktilde\in H^p(\HH;\BB)$ for every $1<p\leq p_0$ (Theorem~\ref{thm:hardy}). The nontrivial point for the operator-valued statement is made explicit: a finite Liouville basis, scalar Cauchy-transform estimates, a common exceptional null set, and finite-dimensional norm equivalence yield the Bochner boundary field. This gives the operator-valued principal-value KK identity and its once-subtracted form (Corollary~\ref{cor:subtracted}), with the subtraction point restricted to the stated Lebesgue-point and principal-value conditions. The $H^1$ endpoint is a separate theorem, not a consequence of the $L^{p_0}$ assumption. We state the result as a projected-spectral criterion; no claim is made that every exact NZ kernel, or every continuum bath, supplies such a density.

The paper then separates three transforms that have different analytic protection. The exact projected kernel is conditional on the spectral hypothesis above. The reduced-state transform $\tilde\sigma(z)$ is holomorphic in $\HH$ for every trace-class initial state under microscopic unitary evolution, with $\|\tilde\sigma(z)\|_1\leq \|\rho_0\|_1/\operatorname{Im}z$ (Proposition~\ref{thm:robustness}); this does not place it in the same Hardy class or make it zero-free. This robustness uses microscopic unitary evolution and trace-class input, but it does not require factorization or the projected-spectrum hypothesis. For correlated preparations the exact equation contains an inhomogeneous term. If that term is discarded in a homogeneous fit, the scalar correction is
\[
  \Ktilde_{\rm eff}(z)=\Ktilde(z)+\tilde{\II}(z)\tilde\sigma(z)^{-1},
\]
and the matrix correction is $\tilde{\II}_{\rm mat}\operatorname{adj}\tilde S/\det\tilde S$. These identities are understood on a common connected open domain $D\subseteq\HH$ where the displayed transforms are holomorphic. A simple baseline zero $\zeta\in D$ gives a first-order quotient pole only when $\tilde{\II}^{(1)}(\zeta)\neq0$; at finite perturbation the actual zero and the actual numerator must be checked. In particular, holomorphy of the state transform is compatible with upper-half-plane zeros, so it cannot be used as a zero-free substitute for a kernel Hardy hypothesis.

This distinction leads to a second, conditional result. For a rational reconstruction with a pole $z_0\in\HH$, the local theorem uses an argument-principle homotopy to count $d$ zeros near $z_0$ under a small-residue bound. The global theorem clears the rational denominators and uses Vieta's formula: the zeros of the matrix-polynomial pencil carry total imaginary part $d\operatorname{Im}z_0>0$, so at least one lies in $\HH$. These algebraic zeros become propagator poles only after the scalar clearing and matrix adjugate no-cancellation tests; a determinant zero introduced solely by clearing a denominator is not counted as a physical pole. If the resulting rational resolvent agrees with the transform of a strongly measurable, uniformly bounded completely positive trace-preserving (CPTP) reduced family on a nonempty open subset, the pole is impossible. This is a statement about an exact representation on that open set, not about an arbitrary Pad\'e approximation; a numerical fit is therefore evidence for a hypothesis to be checked, not a substitute for it. The conditional algebraic obstruction is independent of Hardy placement and is therefore useful when the reconstructed object is not known to arise from an absolutely continuous weight.

The remaining results delimit the scope rather than enlarge the main claim. A positive projected weight gives the anti-Herglotz sign in addition to the Hardy assumptions (Proposition~\ref{thm:passivity}); it is not a consequence of passivity without the positive-weight representation. Bounded finite truncations give a Carleman determinacy criterion, not a stability or continuum convergence theorem (Theorem~\ref{thm:carleman}). The finite Jaynes--Cummings calculation has a discrete spectral measure and therefore tests a shifted, distributional KK identity at fixed truncation. Its integrated relative residual decreases from $0.113\%$ to $0.024\%$ under the recorded grid refinement; this is evidence for fixed numerical consistency, not strict $H^p$ verification. The displayed upper-half-plane zeros are only candidate force-fit locations because numerator non-cancellation has not been established. Finally, the accompanying Lean~4 development~\cite{lean4,mathlib} provides machine-checked verification of the paper-specific deductions from explicitly named classical inputs, bringing the reliability of formal proof to this analytic framework; Appendix~\ref{app:lean} records the precise boundary of what is certified.

The paper is organized as follows. Section~\ref{sec:prelim} fixes the NZ and transform conventions and states the operator-class boundary for the unbounded reduction. Sections~\ref{sec:causality} and~\ref{sec:kk} establish retarded support, Hardy placement, and the KK boundary formulas. Section~\ref{sec:robustness} proves state-transform analyticity and develops the force-fit construction, including the exact $N_{\max}=1$ baseline and the higher-truncation candidate zeros. Section~\ref{sec:consequences} proves the rational pole obstruction, while Section~\ref{sec:further} gives the passivity and finite-truncation consequences. Section~\ref{sec:boundaries} reports the shifted finite-matrix check, and Section~\ref{sec:lean-certification} describes the formalization and its evidence boundary before the Discussion states the remaining model-dependent questions.

\section{Preliminaries}
\label{sec:prelim}

\subsection{Nakajima-Zwanzig projection}

We consider a composite Hilbert space $\mathcal{H} = \mathcal{H}_s \otimes \mathcal{H}_b$ with Hamiltonian $\hat{H} = \hat{H}_s + \hat{H}_b + \hat{H}_{sb}$ and total Liouvillian $\LL\,\cdot \equiv [\hat{H},\,\cdot\,]$.
For bounded $H$, the commutator Liouvillian is a bounded self-adjoint operator on Hilbert--Schmidt space by cyclicity of the trace; this includes every finite truncation below. For unbounded $H$, identify $\mathfrak S_2(\mathcal H)$ with $\mathcal H\otimes\overline{\mathcal H}$ and first define the commutator on its natural initial domain,
\[
  \LL_0=H\otimes\mathbbm 1-\mathbbm 1\otimes\overline H,
  \qquad \mathcal D(\LL_0)=\mathcal D(H\otimes\mathbbm 1)\cap\mathcal D(\mathbbm 1\otimes\overline H).
\]
We denote its standard self-adjoint closure by $\LL=\overline{\LL_0}$; the closure domain is not identified with the initial intersection. Here $\overline H$ acts on the conjugate Hilbert space $\overline{\mathcal H}$. Proposition~\ref{thm:robustness} uses the trace-class evolution $X\mapsto e^{-iHt}Xe^{iHt}$ supplied by Stone's theorem. The $P/Q$ reduction requires more data than that statement alone: the partial trace and reference embedding must be bounded, the coupling blocks must extend to bounded maps, and the joint and compressed evolution families must satisfy the pointwise strong derivative laws for every initial vector and every time, together with the stated continuity at the compressed level. These are explicit all-vector operator-class inputs; in a genuinely unbounded problem they are conditional interfaces, not consequences asserted here from Stone or Hille--Yosida. The choice of operator ideal is part of the data: the state result uses $\mathfrak S_1$, whereas the Hardy result uses the finite-dimensional system Hilbert--Schmidt Liouville space. The Hardy theorem itself assumes the projected Cauchy representation directly. Because the physical partial-trace projection is generally non-orthogonal, no self-adjointness or real spectrum of $Q\LL Q$ is inferred from microscopic unitarity.
We use $\LL=[H,\cdot]$ as a frequency Liouvillian and write the physical generator as $\mathfrak{L}=-i\LL$.
The explicit Laplace convention $\tilde f(z)=\int_0^\infty e^{izt}f(t)\,dt$ fixes the signs in the Cauchy representations below.
To keep the Laplace-domain denominator in the standard dispersive form $z-\LL_s-\Ktilde(z)$, we use the rephased kernel and inhomogeneity defined explicitly below. The factors of $-i$ multiplying them in the physical equation are retained rather than absorbed implicitly.

The projection superoperator $\Proj\rho = \TrOp_b[\rho] \otimes \rho_b^{\mathrm{ref}}$ maps onto the relevant (system) sector; its complement is $\Qproj = \mathbbm{1} - \Proj$, satisfying $\Proj^2 = \Proj$, $\Qproj^2 = \Qproj$, and $\Proj\Qproj = 0$~\cite{nakajima1958,zwanzig1960}. We identify
\begin{equation}
\Proj\rho(t)=\sigma(t)\otimes\rho_b^{\mathrm{ref}},
\qquad \sigma(t)=\TrOp_b[\rho(t)],
\label{eq:projection-identification}
\end{equation}
and suppress this fixed bath factor below.
With this convention, the exact reduced dynamics is represented by the GQME below. For bounded generators this is the standard matrix identity; for unbounded generators it is used under the evolution and domain hypotheses stated in Proposition~\ref{prop:lti}:
\begin{equation}
\frac{d}{dt}\Proj\rho(t) = -i\Proj\LL\Proj\rho(t) -i\int_0^t d\tau\, \KK(\tau)\Proj\rho(t-\tau) -i\II(t),
\label{eq:gqme}
\end{equation}
Here the dispersive memory kernel and the rephased initial-correlation term are
\[
\begin{aligned}
\KK(t)&=-i\Proj\LL\Qproj\,e^{-i\Qproj\LL\Qproj t}\,\Qproj\LL\Proj,\\
\II(t)&=\Proj\LL\Qproj\,e^{-i\Qproj\LL\Qproj t}\,\Qproj\rho(0),
\qquad t\geq0.
\end{aligned}
\]
Indeed, the two physical couplings from $\mathfrak{L}=-i\LL$ produce $(-i)^2=-1$, which is exactly $-i$ times the rephased kernel above; the single initial-correlation coupling produces the last factor $-i$ in Eq.~\eqref{eq:gqme}. Whenever support or a whole-line transform is discussed, $\KK$ denotes the retarded extension obtained by setting this expression to zero for $t<0$.
Choose the projection reference to be the normalized stationary bath state, $\rho_b^{\mathrm{ref}}=\rho_b^{\mathrm{eq}}$ with $\TrOp_b\rho_b^{\mathrm{eq}}=1$. For factorized initial states $\rho(0) = \sigma(0) \otimes \rho_b^{\mathrm{eq}}$, one then has $\Proj\rho(0)=\rho(0)$ and hence $\Qproj\rho(0) = 0$. It follows that $\II(t) = 0$ and the GQME reduces to a homogeneous Volterra equation.
When $H_{sb}$ is bounded (in particular, in every finite truncation), or when the product below is trace-class and its partial trace is a bounded system operator, we write
\[
 \Proj\LL\Proj\,[\sigma\otimes\rho_b^{\mathrm{eq}}]
 = [H_s^{\mathrm{eff}},\sigma]\otimes\rho_b^{\mathrm{eq}},
 \qquad
 H_s^{\mathrm{eff}}=H_s+\TrOp_b\!\bigl[H_{sb}(\mathbbm 1_s\otimes\rho_b^{\mathrm{eq}})\bigr],
\]
and define $\LL_s=[H_s^{\mathrm{eff}},\,\cdot\,]$. Thus the bath mean field is absorbed into the system Hamiltonian. If it vanishes, $H_s^{\mathrm{eff}}=H_s$.
For the general operator-class statement, the bounded relevant drift $\LL_s$ is instead part of the explicit interface; no unbounded partial-trace product is being asserted by this display.

Applying Kramers--Kronig relations to the memory kernel presupposes that the reduced response is a linear, time-invariant causal system and that its transform satisfies the required existence, growth, and boundary-integrability hypotheses. Causal support gives the one-sided structure, but the precise analytic continuation and boundary formula depend on the chosen function or distribution space. For the reduced dynamics of an open quantum system the linear, time-invariant structure is not automatic, and it holds in the regime stated next.

\begin{proposition}[Linear, time-invariant structure of the reduced causal response]
\label{prop:lti}
Work on the joint and reduced trace-class Banach spaces. Assume either a bounded-generator setting (in particular, a finite total Hilbert-space truncation), or the following explicit operator-class inputs: the partial trace and reference-state embedding are bounded and compose to the identity on the reduced space; the joint and compressed evolution families satisfy $U(0)=\mathbbm 1$ and $V(0)=\mathbbm 1$, the joint evolution satisfies its pointwise strong derivative law for every initial vector and every time, the compressed evolution is strongly continuous and satisfies its corresponding pointwise derivative law for every initial vector and every time, and the relevant drift and the two coupling blocks extend to bounded maps and agree with the corresponding restrictions of the Liouvillian. These are the hypotheses encoded by the operator-class data structure; the strong Duhamel formula follows from its printed continuity/differentiability interface. Let the projection reference be a normalized stationary bath state, $\rho_b^{\mathrm{ref}}=\rho_b^{\mathrm{eq}}$, $\TrOp_b\rho_b^{\mathrm{eq}}=1$, and $[\hat H_b,\rho_b^{\mathrm{eq}}]=0$, and let $\rho(0) = \sigma(0)\otimes\rho_b^{\mathrm{eq}}$. Then the exact reduced dynamics~\eqref{eq:gqme} is the homogeneous Volterra equation $\dot{\sigma} = -i\LL_s\sigma-i\KK*\sigma$, which is:
(i)~\emph{linear} in the reduced initial condition $\sigma(0)$, as inherited directly from the linear joint evolution and partial trace. If it is also identified with a Volterra solution in a class where uniqueness holds, that solution map is complex-linear;
(ii)~\emph{time-invariant}, the kernel depending on the time difference $t-\tau$ alone, because the generator and the projection/reference data are time independent and $\KK(t) = -i\Proj\LL\Qproj\,e^{-i\Qproj\LL\Qproj t}\,\Qproj\LL\Proj$; and
(iii)~\emph{causal}, with $\operatorname{supp}\KK\subseteq[0,\infty)$ for the retarded zero extension specified above.
Consequently the reduced response is a linear, time-invariant causal system. Under strong measurability and growth hypotheses sufficient for the one-sided transform to exist, that transform is holomorphic on its domain. Under the separate boundary-integrability hypotheses of Theorems~\ref{thm:hardy} and~\ref{thm:kk}, it has the stated principal-value or $H^1$ Kramers--Kronig boundary values. Singular spectral measures instead require a distributional formulation. None of these conclusions requires a linear-response interpretation of an external drive.
\end{proposition}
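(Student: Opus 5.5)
The plan is to derive the GQME from the joint evolution by the standard Nakajima--Zwanzig splitting, and then read off linearity, time invariance and causality from the explicit form of the kernel. Write $\rho(t)=U(t)\rho(0)$ and decompose it as $\Proj\rho+\Qproj\rho$. The pointwise strong derivative law for $U$ gives two coupled equations:
\[
\tfrac{d}{dt}\Proj\rho=-i\Proj\LL\Proj\rho-i\Proj\LL\Qproj\rho,
\qquad
\tfrac{d}{dt}\Qproj\rho=-i\Qproj\LL\Qproj\,\Qproj\rho-i\Qproj\LL\Proj\rho .
\]
In the operator-class setting, the coupling blocks are the stated bounded extensions of these restrictions.

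I would solve the second equation with the strong Duhamel formula for the compressed family $V(t)=e^{-i\Qproj\LL\Qproj t}$:
\[
\Qproj\rho(t)=V(t)\Qproj\rho(0)-i\int_0^t V(t-s)\,\Qproj\LL\Proj\rho(s)\,ds .
\]
Justifying this formula is the main obstacle in the unbounded case. I would differentiate $s\mapsto V(t-s)\Qproj\rho(s)$ on $[0,t]$ using the pointwise derivative laws of $U$ and $V$ together with strong continuity. The boundedness of $\Qproj\LL\Proj$ and uniform local bounds (from Banach--Steinhaus) make the product rule valid. Integrating then gives the formula, and it is at this point that the data-structure hypotheses enter.

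Substituting the Duhamel formula into the $\Proj$ equation, with the change of variables $\tau=t-s$, yields Eq.~\eqref{eq:gqme} with $\KK$ and $\II$ as defined. The factors of $-i$ match by the bookkeeping explained after Eq.~\eqref{eq:gqme}.

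For the factorized state, $\TrOp_b\rho_b^{\mathrm{eq}}=1$ and the partial trace composed with the embedding is the identity, so $\Proj\rho(0)=\rho(0)$. Hence $\Qproj\rho(0)=0$ and $\II\equiv0$. The relation $[\hat H_b,\rho_b^{\mathrm{eq}}]=0$ kills the bath commutator in $\Proj\LL\Proj$, which leaves $\LL_s$ with the mean field absorbed. In the operator-class case I would take $\LL_s$ directly from the interface. Suppressing the bath factor via \eqref{eq:projection-identification} gives $\dot\sigma=-i\LL_s\sigma-i\KK*\sigma$.

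The three properties then follow from this structure.
(i) Linearity: $\sigma(t)=\TrOp_b\,U(t)\,[\sigma(0)\otimes\rho_b^{\mathrm{eq}}]$ is a composition of linear maps. If uniqueness holds for Volterra solutions in the chosen class, the Volterra solution map coincides with this composition and is therefore complex-linear.
(ii) Time invariance: neither $\LL$ nor $\Proj$ nor $\Qproj$ depends on time, so the integrand involves only $V(t-s)$. The kernel is therefore a function of $t-\tau$ alone, given by the displayed formula.
(iii) Causality: the Volterra integral only ever uses $\KK$ on $[0,t]$, and the retarded zero extension vanishes for $t<0$ by definition, so $\operatorname{supp}\KK\subseteq[0,\infty)$.

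The remaining consequences are conditional and would be handled as follows. Under the stated strong measurability and growth hypotheses, $\int_0^\infty e^{izt}\KK(t)\,dt$ is holomorphic on its domain of convergence, by dominated convergence applied to difference quotients (or Morera). The boundary statements are simply deferred to Theorems~\ref{thm:hardy} and~\ref{thm:kk}.
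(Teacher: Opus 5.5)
Your proposal is correct and follows essentially the same route as the paper's proof. That route is: the $\Proj/\Qproj$ split with a variation-of-constants (strong Duhamel) formula for the $\Qproj$ component; factorization with $\TrOp_b\rho_b^{\mathrm{eq}}=1$, giving $\Qproj\rho(0)=0$ and $\II\equiv0$; and then linearity, time invariance and retarded support read off from the joint evolution and partial trace, the time-independent generator, and the zero-past extension, with the transform and boundary statements left conditional. You give more detail than the paper on the Duhamel step, which the paper attributes to the printed continuity/differentiability interface; the one point to make explicit is that the cancellation giving $\frac{d}{ds}\bigl[V(t-s)\Qproj\rho(s)\bigr]=-iV(t-s)\Qproj\LL\Proj\rho(s)$ uses the derivative law of $V$ in the backward form $\frac{d}{d\tau}V(\tau)x=-iV(\tau)\Qproj\LL\Qproj x$ (or commutation of $V$ with $\Qproj\LL\Qproj$).
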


\vspace{-18pt}
\begin{proof}
In the bounded setting, the ordinary matrix variation-of-constants identity gives the $\Proj/\Qproj$ split and Eq.~\eqref{eq:gqme}; in the general statement the strong Duhamel formula follows from the printed continuity/differentiability interface. The normalization and matching-reference assumptions imply $\Proj\rho(0)=\rho(0)$, hence $\Qproj\rho(0)=0$ and $\II(t)=0$. Linearity follows from joint evolution and partial trace. The time-independent generator and fixed projection/reference data make $\KK$ a function of elapsed time alone, while causal support is that of the retarded zero extension defined after Eq.~\eqref{eq:gqme}. The transform and boundary conclusions use the additional hypotheses stated in the proposition.
\end{proof}

For correlated preparations $\Qproj\rho(0)$ need not vanish. The exact equation then contains the prescribed inhomogeneity $\II(t)$, so there is in general no autonomous homogeneous reduced map determined by $\sigma(0)$ alone. Force-fitting such a trajectory to a homogeneous equation creates the effective object studied in Section~\ref{sec:eff-kernel}; it does not alter the support convention of the exact NZ kernel.

\subsection{Hardy spaces and the Nevanlinna class}

For $0 < p \leq \infty$, the Hardy space $\Hp{p}(\HH)$ on the upper half-plane $\HH = \{z \in \mathbb{C} : \operatorname{Im}(z) > 0\}$ consists of analytic functions $F : \HH \to \mathbb{C}$ satisfying
\begin{equation}
\|F\|_{\Hp{p}} \equiv \sup_{y>0}\left(\int_{-\infty}^{\infty} |F(x+iy)|^p\, dx\right)^{1/p} < \infty.
\label{eq:hardy}
\end{equation}
For $p=\infty$ the usual modification is understood, with the essential supremum replacing the $L^p$ integral.
Functions in $\Hp{1}$ have non-tangential boundary values a.e.\ on $\mathbb{R}$ and satisfy the Cauchy integral formula on semicircular contours in $\HH$.
The vector-valued Hardy space $\Hp{p}(\BB)$ replaces $|F|$ by the operator norm $\|F\|_\opnorm$ throughout; the Banach space $\BB = \mathcal{B}(\mathcal{B}_{\mathrm{HS}}(\mathcal{H}_s))$ is that of bounded superoperators on the system's Hilbert-Schmidt Liouville space, equipped with the operator norm.
For finite-dimensional $\mathcal{H}_s$, the operator norm and the Hilbert-Schmidt norm are equivalent up to a dimension-dependent constant; all $\Hp{p}$ statements are equivalent between the two choices in that case.
Boundary values and principal-value integrals of $\BB$-valued functions are understood in the Bochner sense~\cite{diestel1977}; by polarization, equivalent statements hold against arbitrary pairs of Hilbert-Schmidt test vectors.

The Herglotz-Nevanlinna class $\mathcal{N}$ consists of functions $f : \HH \to \mathbb{C}$ analytic in $\HH$ with $\operatorname{Im}[f(z)] \geq 0$.
Every such function admits the \emph{Nevanlinna integral representation}~\cite{gesztesy2000}:
\begin{equation}
f(z) = \alpha z + \beta + \int_{-\infty}^{\infty} \left[\frac{1}{\lambda - z} - \frac{\lambda}{1+\lambda^2}\right] d\mu(\lambda),
\label{eq:nevanlinna}
\end{equation}
with $\alpha \geq 0$, $\beta \in \mathbb{R}$, and $\mu$ a positive Borel measure satisfying $\int d\mu(\lambda)/(1+\lambda^2) < \infty$.
Functions with $\operatorname{Im}[f(z)] \leq 0$ (the \emph{anti-Herglotz} class, $-\mathcal{N}$) are obtained by negating Eq.~\eqref{eq:nevanlinna}: the measure changes sign and $\alpha \leq 0$.
Under the positivity and spectral hypotheses of Proposition~\ref{thm:passivity}, every scalar quadratic form $-\langle\!\langle\xi,\Ktilde(z)\xi\rangle\!\rangle$ belongs to $\mathcal N$ and admits the scalar representation~\eqref{eq:nevanlinna}. Equivalently, $-\Ktilde$ is an operator-valued Herglotz function. This terminology does not identify the operator-valued function itself with the scalar class just defined.

\section{Retarded NZ Kernel and Hardy Placement}
\label{sec:causality}

\begin{proposition}[Retarded support and spectral decay]
\label{thm:causality}
Let $\KK(t)$ be the projected kernel defined by Eq.~\eqref{eq:gqme} for $t\geq0$, and use its retarded zero extension on the full real line. Then $\operatorname{supp}\KK\subseteq[0,\infty)$. If, in addition, the projected kernel admits the real-axis coupling-weighted spectral representation
\begin{equation}
\KK(t)=-i\int_{\mathbb{R}} e^{-i\lambda t} w(\lambda)\,d\lambda,
\qquad w\in L^1(\mathbb{R};\BB),
\label{eq:thm-causality-rep}
\end{equation}
then the kernel decays, $\|\KK(t)\|_\opnorm \to 0$ as $t \to \infty$, and its Laplace transform is the Cauchy transform of $w$, that is, Eq.~\eqref{eq:spectral-rep} below holds with the same $w$.
\end{proposition}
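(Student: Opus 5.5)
The plan has three parts: support, decay, and the transform identity.

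\emph{Support.} This is immediate from the convention fixed after Eq.~\eqref{eq:gqme}. The retarded extension is defined to vanish for $t<0$. Hence $\{t:\KK(t)\neq0\}\subseteq[0,\infty)$, and its closure $\operatorname{supp}\KK$ also lies in $[0,\infty)$. No property of $Q\LL Q$ is needed here. This is the causal statement (iii) of Proposition~\ref{prop:lti}.

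\emph{Decay.} Assume the representation \eqref{eq:thm-causality-rep} for $t\ge0$ with $w\in L^1(\mathbb R;\BB)$. Then $\KK(t)$ is, up to the factor $-i$, the Bochner Fourier transform of an integrable $\BB$-valued function. I would invoke the vector-valued Riemann--Lebesgue lemma. Since $\BB$ is finite-dimensional here, the simplest route is to fix a basis of the system Liouville space and write $w$ through its finitely many matrix entries $w_{jk}\in L^1(\mathbb R)$. The scalar Riemann--Lebesgue lemma gives $\int e^{-i\lambda t}w_{jk}(\lambda)\,d\lambda\to0$ for each entry. Norm equivalence on finite-dimensional $\BB$ then yields $\|\KK(t)\|_\opnorm\to0$.

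Alternatively, without finite dimension, one approximates $w$ in $L^1(\mathbb R;\BB)$ by simple functions built on intervals. For these the integral is $O(1/t)$ explicitly. The $L^1$ error is uniform in $t$ because $|e^{-i\lambda t}|=1$.

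\emph{Laplace transform.} For $\operatorname{Im}z>0$, compute
\[
\Ktilde(z)=\int_0^\infty e^{izt}\KK(t)\,dt=-i\int_0^\infty\!\int_{\mathbb R}e^{i(z-\lambda)t}w(\lambda)\,d\lambda\,dt .
\]
The integrand has norm $e^{-t\operatorname{Im}z}\|w(\lambda)\|_\opnorm$, which is integrable on $[0,\infty)\times\mathbb R$ with bound $\|w\|_{L^1}/\operatorname{Im}z$. Bochner--Fubini therefore allows swapping the order. The inner time integral is
\[
\int_0^\infty e^{i(z-\lambda)t}\,dt=\frac{-1}{i(z-\lambda)}=\frac{i}{z-\lambda},
\]
since $\operatorname{Im}(z-\lambda)>0$. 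Combining this with the prefactor $-i$ gives
\[
\Ktilde(z)=\int_{\mathbb R}\frac{w(\lambda)}{z-\lambda}\,d\lambda,
\]
which is the Cauchy representation \eqref{eq:spectral-rep} with the same $w$.

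\emph{Main obstacle.} The only delicate point is justifying Fubini for Bochner integrals: we need joint strong measurability of $(t,\lambda)\mapsto e^{i(z-\lambda)t}w(\lambda)$ and the integrable majorant above. In finite dimensions this again reduces entrywise to scalar Fubini--Tonelli, so I would argue entrywise throughout. I would also note that the representation is assumed only on $t\ge0$, which is all the Laplace integral uses. Holomorphy of the resulting transform on $\HH$ follows from differentiating under the integral, though the statement does not require it.
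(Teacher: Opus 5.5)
Your proposal is correct and follows essentially the same route as the paper: support from the zero-past convention, decay by Riemann--Lebesgue, and the Cauchy representation from $\int_0^\infty e^{i(z-\lambda)t}\,dt=i/(z-\lambda)$ cancelling the prefactor $-i$. Your integrable majorant $\|w\|_{L^1}/\operatorname{Im}z$ for Fubini and the entrywise reduction in finite dimension make explicit the interchange of integrals that the paper performs without comment.
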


\begin{proof}
The support statement follows from the retarded extension: the projected formula defines $\KK(t)$ for nonnegative elapsed time, and the extension is zero for $t<0$. Factorization is needed in Proposition~\ref{prop:lti} to remove $I(t)$ and obtain a homogeneous reduced equation; it is not needed to define this support.

Decay is not a consequence of bounded projected propagation alone; it follows from the assumed absolutely continuous spectral representation.
By the Riemann--Lebesgue lemma, $\|\KK(t)\|_\opnorm = \|\int e^{-i\lambda t} w(\lambda)\, d\lambda\|_\opnorm \to 0$ as $t \to \infty$ (the unimodular prefactor $-i$ does not affect the norm).

For the Cauchy representation, substitute Eq.~\eqref{eq:thm-causality-rep} into $\Ktilde(z)=\int_0^\infty e^{izt}\KK(t)\,dt$ and use, for $\operatorname{Im}z>0$,
\[
\int_0^\infty e^{i(z-\lambda)t}\,dt=\frac{i}{z-\lambda},
\qquad\text{whence}\qquad
\Ktilde(z)=\int_{\mathbb{R}}\frac{w(\lambda)}{z-\lambda}\,d\lambda .
\]
The $-i$ in Eq.~\eqref{eq:thm-causality-rep} and the $+i$ from the Laplace integral cancel, so the Cauchy representation follows with no redefinition of $w$.
\end{proof}

\begin{remark}[Decay rate depends on analytic structure, not a spectral gap]
A compact-support or gap condition on $w$ alone does not determine a decay rate. Exponential decay can be obtained from an analytic continuation into a strip, together with the bounds needed to deform the Fourier contour. Such strip control is an additional regularity assumption, not a consequence of a real-axis gap.
\end{remark}

\begin{theorem}[Hardy Placement under a real-axis spectral hypothesis]
\label{thm:hardy}
Let the system Liouville space be finite-dimensional.
Assume that, for the chosen projection/inner-product pair, the projected kernel has the Cauchy-transform representation
\begin{equation}
\Ktilde(z) = \int_{-\infty}^{\infty} \frac{w(\lambda)}{z - \lambda}\,d\lambda,
\qquad z\in\HH,
\label{eq:spectral-rep}
\end{equation}
with no singular real-axis part.
Assume further that there exists $p_0 > 1$ such that $w \in L^1(\mathbb{R};\BB)\cap L^{p_0}(\mathbb{R};\BB)$.
Then $\Ktilde$ belongs to $\Hp{p}(\BB)$ for every finite $p$ with $1 < p \leq p_0$.
\end{theorem}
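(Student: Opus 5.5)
The proof reduces the operator-valued statement to finitely many scalar Cauchy transforms and then uses the classical $L^p$ theory of the Hilbert transform. Fix an orthonormal basis $\{e_j\}_{j=1}^{d}$ of the finite-dimensional Hilbert--Schmidt Liouville space, with $d=(\dim\mathcal H_s)^2$. Write the matrix entries $w_{jk}(\lambda)=\langle\!\langle e_j,w(\lambda)e_k\rangle\!\rangle$ and $\Ktilde_{jk}(z)=\langle\!\langle e_j,\Ktilde(z)e_k\rangle\!\rangle$. Each entry satisfies $|w_{jk}|\le\|w\|_\opnorm$, so every $w_{jk}$ lies in $L^1(\mathbb R)\cap L^{p_0}(\mathbb R)$. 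By interpolation (H\"older), $w_{jk}\in L^p(\mathbb R)$ for every $1<p\le p_0$. Since the Bochner integral commutes with bounded linear functionals, each entry has the scalar Cauchy representation
\[
\Ktilde_{jk}(z)=\int_{\mathbb R}\frac{w_{jk}(\lambda)}{z-\lambda}\,d\lambda .
\]

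For the scalar estimate, put $z=x+iy$ and split the kernel $1/(z-\lambda)$ into its real and imaginary parts. This gives
\[
\Ktilde_{jk}(x+iy)=-\pi\,(P_y*w_{jk})(x)+\pi i\,(Q_y*w_{jk})(x),
\]
where $P_y$ is the Poisson kernel and $Q_y$ is the conjugate Poisson kernel of the upper half-plane; this matches the kernel up to the sign and normalization conventions fixed by the paper's Laplace transform. The Poisson part is controlled by Young's inequality: $\|P_y*f\|_p\le\|f\|_p$ uniformly in $y$, because $\|P_y\|_1=1$. The conjugate part is controlled by the M.~Riesz theorem, which asserts that the truncated Hilbert transforms are bounded on $L^p$ for $1<p<\infty$. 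Its standard corollary is $\sup_{y>0}\|Q_y*f\|_p\le C_p\|f\|_p$; this can be obtained either from $Q_y*f=P_y*(Hf)$ or from the maximal truncated Hilbert transform. Together these give
\[
\sup_{y>0}\|\Ktilde_{jk}(\cdot+iy)\|_p\le C_p'\|w_{jk}\|_p<\infty .
\]
Holomorphy of $\Ktilde$ on $\HH$ follows from differentiation under the integral sign: for $\operatorname{Im}z\ge\delta>0$ the integrand is dominated by $\delta^{-1}\|w(\lambda)\|_\opnorm\in L^1$. The $L^1$ assumption on $w$ is used exactly here, to make the representation converge absolutely and define a holomorphic function.

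To reassemble the operator-valued bound, use finite-dimensional norm equivalence. There is a constant $c_d$ with $\|A\|_\opnorm\le c_d\max_{j,k}|A_{jk}|\le c_d\sum_{j,k}|A_{jk}|$. Applying Minkowski's inequality in $L^p(dx)$ to this sum gives
\[
\sup_{y>0}\|\Ktilde(\cdot+iy)\|_{L^p(\opnorm)}\le c_d\sum_{j,k}\sup_{y>0}\|\Ktilde_{jk}(\cdot+iy)\|_p<\infty,
\]
which is the statement $\Ktilde\in\Hp{p}(\BB)$. Strong measurability in $x$ is automatic, since $\Ktilde$ is continuous on horizontal lines. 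If one also wants the Bochner boundary field used later, it comes out of the same construction. Each scalar entry has an a.e.\ nontangential limit by the scalar $\Hp{p}$ theory, and the union of the $d^2$ exceptional null sets is still null. The limiting matrix field is therefore defined a.e., and it converges in operator norm by norm equivalence.

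The main obstacle is the uniform-in-$y$ bound on the conjugate Poisson part, which is the only non-elementary input. I would cite it as a named classical result (Riesz's theorem together with the identity $Q_y*f=P_y*Hf$) rather than reprove it. This is also where $p>1$ enters: for $p=1$ the Hilbert transform is not bounded on $L^1$. That is why the $H^1$ endpoint needs a separate theorem, and why $L^1\cap L^{p_0}$ alone yields only $1<p\le p_0$.
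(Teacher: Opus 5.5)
Your proposal is correct and follows essentially the same route as the paper. You expand $w$ in a finite basis of $\BB$ (your orthonormal matrix entries are one choice of the paper's coordinate functionals $\ell_a$), apply the scalar $L^p\to H^p$ Cauchy-transform bound in each coordinate, take a common exceptional null set, and reassemble with finite-dimensional norm equivalence; the only differences are that you unpack the scalar bound via the Poisson and conjugate-Poisson kernels and M.~Riesz, where the paper cites it, and that you check holomorphy explicitly.

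One minor slip: with $z=x+iy$ one actually has $\Ktilde_{jk}(x+iy)=\pi(Q_y*w_{jk})(x)-i\pi(P_y*w_{jk})(x)$, so your real and imaginary labels are swapped, but this does not affect any norm estimate.
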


\begin{proof}
Since $w\in L^1\cap L^{p_0}$, interpolation gives $w\in L^p$ for every $1<p\leq p_0$.
Fix a basis $\{B_a\}_{a=1}^{D}$ of the finite-dimensional space $\BB$, with coordinate functionals $\ell_a$. Write $w(\lambda)=\sum_a w_a(\lambda)B_a$, where $w_a=\ell_a\circ w\in L^p(\mathbb R)$. For each $a$, the scalar Cauchy-transform theorem gives
\[
 F_a(z)=\int_{\mathbb R}\frac{w_a(\lambda)}{z-\lambda}\,d\lambda\in H^p(\HH),
 \qquad \|F_a\|_{H^p}\leq C_p\|w_a\|_{L^p}.
\]
The boundary formula for each component holds off a null set $N_a$. Since the basis is finite, $N=\bigcup_aN_a$ is still null, so all componentwise boundary identities hold simultaneously off $N$.

Finite-dimensional norm equivalence gives constants $c_D,C_D>0$ such that
\[
 c_D\max_a|x_a|\leq\Bigl\|\sum_ax_aB_a\Bigr\|_\opnorm
 \leq C_D\sum_a|x_a|.
\]
Applying the upper bound to $\Ktilde(z)=\sum_aF_a(z)B_a$, followed by the finite-sum inequality in $L^p$, yields
\[
 \sup_{y>0}\|\Ktilde(\,\cdot+iy)\|_{L^p(\mathbb R;\BB)}
 \leq C(D,p)\|w\|_{L^p(\mathbb R;\BB)}.
\]
Thus $\Ktilde\in H^p(\HH;\BB)$. This componentwise proof is sufficient because $D<\infty$; no infinite-dimensional vector-valued singular-integral theorem is required.

The exclusion of $p = 1$ is an endpoint limitation of this general theorem: the Cauchy transform is bounded from $L^p(\mathbb{R})$ to $H^p(\HH)$ for $1 < p < \infty$ but the Hilbert transform is not bounded on all of $L^1$.
Thus a standard $H^1$ KK theorem may still hold for a special model, but it requires a separate endpoint estimate rather than following from the present $L^{p_0}$ hypothesis.
For threshold singularities or algebraic tails, the safer general statement is the $L^p$/principal-value or suitably subtracted form in Corollary~\ref{cor:subtracted}.
\end{proof}

\begin{remark}[Verifying integrability in specific models]
The theorem is stated only in terms of the projected weight $w$. In controlled weak-coupling constructions for linearly coupled Gaussian baths, the leading weight can sometimes be read from the bath spectral density. For example, a threshold $w(\lambda)\sim|\lambda|^{s-1}$ with $0<s<1$ permits only $p<1/(1-s)$ near the origin. Beyond such a controlled setting, identifying the operator-valued $L^p$ weight for labels such as Ohmic or Drude--Lorentz is a concrete model-dependent task.
\end{remark}

\begin{remark}[On the spectral hypothesis and its verification]
\label{rem:spectral-hypothesis}
Theorem~\ref{thm:hardy} is conditional: it assumes the projected generator admits a real-axis spectral representation with $w\in L^{p_0}$.
For the standard Nakajima--Zwanzig partial-trace projection $P\rho = \TrOp_b[\rho] \otimes \rho_b^{\mathrm{ref}}$, this is an additional assumption, not a consequence of microscopic unitarity alone.

Numerical inspection of the projected generator $\Qproj\LL\Qproj$ in truncated models confirms that the real-spectrum condition is model-dependent; the finite-grid counts and thresholds are recorded in Appendix~\ref{app:qlq-spectrum}. These tests establish neither a general Jaynes--Cummings theorem nor a continuum phase boundary. They show only that real spectrum of the physical projected generator is not automatic. In non-perturbative settings, verifying the $L^p$ condition requires identifying the exact operator-valued projected weight, which remains a model-dependent task.
Theorem~\ref{thm:hardy} therefore provides an explicit sufficient-condition framework within which KK relations are rigorously valid. The numerical checks retained here are the projected-generator scans in Appendix~\ref{app:qlq-spectrum} and a shifted-line operator-valued consistency test of an extracted $4\times4$ Jaynes--Cummings kernel. They test the real-spectrum condition on finite grids and the distributional boundary structure at fixed truncation; they do not establish strict $H^p$ membership for a finite discrete bath.
\end{remark}

\begin{remark}[Finite real-spectrum truncations and recurrence]
\label{rem:finite}
Finite dimensionality alone does not imply the real-axis spectral representation used in Theorem~\ref{thm:hardy}: a non-orthogonal NZ projection can produce a non-normal $\Qproj\LL\Qproj$ with complex eigenvalues or Jordan blocks. A finite truncation that additionally admits a diagonalizable real-axis atomic representation has
\[
\KK(t)=-i\sum_n w_ne^{-i\lambda_nt},
\qquad
\Ktilde(z)=\sum_n\frac{w_n}{z-\lambda_n},
\qquad \lambda_n\in\mathbb R.
\]
After equal frequencies are combined, any surviving nonzero atom excludes membership in $\Hp{p}$ for every $p>1$, the range of Theorem~\ref{thm:hardy}. Indeed, a nonzero simple boundary pole gives
\[
\sup_{y>0}\bigl\|(\,\cdot+iy-\lambda_n)^{-1}\bigr\|_{L^p(\mathbb R)}=\infty,
\]
and no bounded contribution from the other frequencies removes this local divergence. At $p=1$ the same local divergence is logarithmic. For $0<p<1$, membership can depend on cancellations in the high-frequency moments; that range is not used here.
Its KK relation is therefore distributional, and a shifted-line reconstruction is well defined away from the singular frequencies. A general finite truncation yields only a rational or meromorphic transform; complex projected eigenvalues and Jordan blocks can instead produce off-axis or higher-order poles, so no real-axis distributional KK statement follows without the added spectral assumptions.
\end{remark}

\section{Operator-Valued Dispersion Relations}
\label{sec:kk}

\subsection{Kramers--Kronig relations}

Fix once and for all a Hilbert--Schmidt matrix-unit basis $\mathcal E$ for the system Liouville space. If $F=[F_{ab}]_{\mathcal E}$ is operator-valued, write $\operatorname{Re}_{\mathcal E}F=[\operatorname{Re}F_{ab}]_{\mathcal E}$ and $\operatorname{Im}_{\mathcal E}F=[\operatorname{Im}F_{ab}]_{\mathcal E}$. These are entrywise real and imaginary parts in the fixed basis, not the Hermitian and anti-Hermitian parts of the superoperator.

\begin{theorem}[Operator-valued KK boundary identity]
\label{thm:kk}
Let $\KK(t)$ be causal and absolutely integrable with $\Ktilde \in \Hp{1}(\BB)$.
Then the standard KK relations hold a.e.:
\begin{equation}
\operatorname{Re}_{\mathcal E}\Ktilde(\omega)
= \frac{1}{\pi}\PP\!\int_{-\infty}^{\infty}\!\! d\omega'\,
\frac{\operatorname{Im}_{\mathcal E}\Ktilde(\omega')}{\omega' - \omega}.
\label{eq:kk}
\end{equation}
\end{theorem}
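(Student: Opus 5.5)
The plan is to reduce to the scalar $H^1$ theory componentwise, as in the proof of Theorem~\ref{thm:hardy}. In the fixed matrix-unit basis $\mathcal E$ write $\Ktilde(z)=\sum_{a,b}F_{ab}(z)E_{ab}$, where $E_{ab}$ denotes the matrix unit in position $(a,b)$ and $F_{ab}=\ell_{ab}\circ\Ktilde$, with $\ell_{ab}$ the coordinate functionals. Since $\BB$ is finite-dimensional, the norm equivalence used in Theorem~\ref{thm:hardy} gives $|F_{ab}(z)|\leq c_D^{-1}\|\Ktilde(z)\|_\opnorm$. Hence each $F_{ab}$ is holomorphic on $\HH$ and satisfies $\sup_{y>0}\int|F_{ab}(x+iy)|\,dx<\infty$, so $F_{ab}\in\Hp{1}(\HH)$. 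Causality and absolute integrability of $\KK$ are used only to identify $\Ktilde$ with the one-sided Fourier--Laplace transform. Then $\Ktilde$ extends continuously to $\overline{\HH}$, and $\Ktilde(\omega)=\int_0^\infty e^{i\omega t}\KK(t)\,dt$ on the real axis. This identifies the boundary function in \eqref{eq:kk} with the nontangential limit of each $F_{ab}$.

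Next, for each scalar $F_{ab}$, I invoke the classical $H^1$ boundary theory. The nontangential limits $F_{ab}(\omega)$ exist a.e. and lie in $L^1(\mathbb R)$. The Cauchy formula $F_{ab}(z)=\frac{1}{2\pi i}\int\frac{F_{ab}(\omega')}{\omega'-z}\,d\omega'$ holds, and the conjugate-side integral $\int\frac{F_{ab}(\omega')}{\omega'-\bar z}\,d\omega'$ vanishes. The latter can be checked with the semicircular contours mentioned in the preliminaries, or via Titchmarsh's theorem. Letting $z\to\omega$ nontangentially and using Plemelj--Sokhotski gives $F_{ab}(\omega)=\frac{1}{\pi i}\PP\!\int\frac{F_{ab}(\omega')}{\omega'-\omega}\,d\omega'$ for a.e.\ $\omega$. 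Here the principal value exists a.e.\ because the Hilbert transform of an $L^1$ function exists a.e. (Kolmogorov). Taking real parts gives $\operatorname{Re}F_{ab}(\omega)=\frac1\pi\PP\!\int\frac{\operatorname{Im}F_{ab}(\omega')}{\omega'-\omega}\,d\omega'$ off a null set $N_{ab}$. Taking real parts commutes with the principal-value integral, since the kernel $1/(\omega'-\omega)$ is real.

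Finally, let $N=\bigcup_{a,b}N_{ab}$. This is a finite union of null sets, hence null. Off $N$, all entrywise identities hold simultaneously. Reassembling them in the basis $\mathcal E$ gives exactly \eqref{eq:kk}, with $\operatorname{Re}_{\mathcal E}$ and $\operatorname{Im}_{\mathcal E}$ acting entrywise as defined before the theorem. For the Bochner interpretation, the operator-valued principal-value integral equals $\sum_{ab}(\cdots)E_{ab}$. Its truncated integrals converge in norm precisely when all finitely many scalar truncations converge, again by finite-dimensional norm equivalence.

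The main obstacle is the endpoint $p=1$. The Hilbert transform is not bounded on $L^1$, so the principal value cannot be justified by an $L^p$ operator bound. It has to come from the a.e.\ existence theorem together with the $H^1$ Cauchy/Plemelj argument, rather than from the Cauchy-transform estimate used in Theorem~\ref{thm:hardy}. I would therefore cite the scalar $H^1$ boundary theorem as a named classical input, rather than reprove it. The only genuinely operator-valued step is the common null set together with norm equivalence.
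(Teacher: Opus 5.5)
Your proposal is correct and follows the same route as the paper: cite the scalar $H^1$ Kramers--Kronig relation, apply it to each coordinate in the fixed basis $\mathcal E$, take the finite union of exceptional null sets, and reassemble the entrywise identities by finite-dimensional norm equivalence. Your extra material is consistent with that proof but goes beyond what it needs; this includes the Cauchy-formula/Plemelj sketch of the scalar input and the remark that absolute integrability of the kernel makes the boundary values continuous.
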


\begin{corollary}[Principal-value and subtracted KK beyond $H^1$]
\label{cor:subtracted}
If $\Ktilde \in \Hp{p}(\BB)$ for some finite $p>1$, Eq.~\eqref{eq:kk} holds in the $L^p$ boundary-value/principal-value sense. Let $\omega_*$ be a common Lebesgue point of the boundary components at which the principal-value identity holds and $\Ktilde(\omega_*)$ is finite. Then, for almost every $\omega$ in the same full-measure set, subtraction of the two principal-value identities gives
\begin{equation}
\operatorname{Re}_{\mathcal E}\Ktilde(\omega)
-\operatorname{Re}_{\mathcal E}\Ktilde(\omega_*) =
\frac{\omega-\omega_*}{\pi}\PP\!\int_{-\infty}^{\infty}\!\! d\omega'\,
\frac{\operatorname{Im}_{\mathcal E}\Ktilde(\omega')}
{(\omega'-\omega_*)(\omega' - \omega)} .
\label{eq:kk-subtracted}
\end{equation}
If the right-hand side is used as an absolutely convergent integral rather than as the difference of two principal values, that convergence is an additional assumption. In particular, $\omega_*$ cannot be placed at a threshold where the boundary value diverges.
\end{corollary}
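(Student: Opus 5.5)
The statement to prove is Corollary~\ref{cor:subtracted}. As in the proof of Theorem~\ref{thm:hardy}, we reduce to scalar components in the fixed matrix-unit basis $\mathcal E$ and then apply the classical $L^p$ theory.

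\emph{Step 1: reduction to scalar components.} If $\Ktilde\in\Hp{p}(\BB)$, then each entry $F_{ab}=[\Ktilde]_{ab}$ is a scalar $H^p(\HH)$ function. This is because coordinate functionals on the finite-dimensional space $\BB$ are bounded, so $\|F_{ab}(\cdot+iy)\|_{L^p}\leq C_D\|\Ktilde(\cdot+iy)\|_{L^p(\BB)}$.

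\emph{Step 2: scalar boundary theory.} For $1<p<\infty$, the classical theory gives non-tangential boundary values $F_{ab}(\omega)$ a.e., with $F_{ab}(\cdot+iy)\to F_{ab}$ in $L^p$. Moreover, the boundary function satisfies $H F_{ab}=-iF_{ab}$ (or $+i$, depending on the orientation convention). Here $H$ is the Hilbert transform, bounded on $L^p$ by M.~Riesz, and realized a.e.\ as a principal value. Taking real and imaginary parts of this identity yields the scalar relation $\operatorname{Re}F_{ab}=\frac1\pi\PP\!\int\operatorname{Im}F_{ab}(\omega')/(\omega'-\omega)\,d\omega'$ off a null set $N_{ab}$. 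The sign should be checked against the $\Ktilde(z)=\int w/(z-\lambda)$ convention. Since there are finitely many components, the union $N=\bigcup N_{ab}$ is null, and entrywise assembly gives Eq.~\eqref{eq:kk} in the $L^p$/principal-value sense. This is exactly the $\operatorname{Re}_{\mathcal E},\operatorname{Im}_{\mathcal E}$ statement. Because finite-dimensional norms are equivalent, the Bochner form follows as well.

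\emph{Step 3: the subtracted identity.} Fix $\omega_*\notin N$, chosen as a common Lebesgue point at which each principal value exists and the boundary value is finite. Write the principal-value identity at $\omega$ and at $\omega_*$ and subtract. Both principal values exist separately, so the difference is the limit of $\frac1\pi\int_{|\omega'-\omega|>\epsilon,\,|\omega'-\omega_*|>\epsilon}\operatorname{Im}F_{ab}(\omega')\bigl[\tfrac{1}{\omega'-\omega}-\tfrac{1}{\omega'-\omega_*}\bigr]d\omega'$. On the same truncation domain we then use the algebraic identity $\frac1{\omega'-\omega}-\frac1{\omega'-\omega_*}=\frac{\omega-\omega_*}{(\omega'-\omega)(\omega'-\omega_*)}$. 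The truncated integrals are finite because $\operatorname{Im}F_{ab}\in L^p$ and the kernels are in $L^{p'}$ away from the excised neighbourhoods. Hence the right-hand side of Eq.~\eqref{eq:kk-subtracted} is well defined as a double principal value, and it equals the difference of the two principal values. This holds for every $\omega\notin N$ with $\omega\neq\omega_*$.

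The main obstacle is the interchange of the two $\epsilon$-limits. The excision around $\omega_*$ must be removed jointly with the excision around $\omega$, so we must argue carefully that the symmetric excision at $\omega_*$ contributes a vanishing error for the term $1/(\omega'-\omega)$. That term is smooth near $\omega_*$ when $\omega\neq\omega_*$, so the error is $O(\epsilon)$ times the local average of $|\operatorname{Im}F_{ab}|$, and this average stays bounded precisely because $\omega_*$ is a Lebesgue point. This is also why the statement forbids thresholds where the boundary value diverges. Absolute convergence of the combined integral is not claimed, consistent with the corollary's caveat.
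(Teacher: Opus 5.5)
Your proposal is correct and follows the same route as the paper's proof. Both reduce componentwise in the fixed basis $\mathcal E$, apply the scalar $L^p$ Hardy-projection/principal-value theorem off a common finite union of null sets, and subtract the principal-value identities at $\omega$ and $\omega_*$ via the partial-fraction identity, with the right-hand side read as their difference.

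Your added joint-excision argument is a harmless refinement, but the Lebesgue point is not what makes it work. The cross-excision error is bounded by $\int_{|\omega'-\omega_*|\le\epsilon}|\operatorname{Im}F_{ab}|\,d\omega'/(|\omega-\omega_*|-\epsilon)$, and this vanishes by local integrability alone. The hypotheses on $\omega_*$ are needed instead so that the identity at $\omega_*$ holds and $\operatorname{Re}_{\mathcal E}\Ktilde(\omega_*)$ is finite.
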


Theorem~\ref{thm:hardy} delivers $\Hp{p}(\BB)$ membership only for $p>1$, so on the models treated here the operative dispersion relation is the $L^p$/principal-value form of Corollary~\ref{cor:subtracted}, in its once-subtracted variant when the tail is algebraic.
Theorem~\ref{thm:kk} is stated at the classical $H^1$ endpoint, where the relation takes its standard almost-everywhere pointwise form; the corollary is what covers the whole admissible range $1<p\leq p_0$ produced by Theorem~\ref{thm:hardy}.

\begin{proof}[Proof of Theorem~\ref{thm:kk}]
We separate the scalar input from its finite-dimensional componentwise lift.

\emph{Scalar input (cited).}
For a scalar $f\in H^1(\HH)$, the a.e.\ non-tangential boundary value and the scalar Hardy projection theorem give
\begin{equation}
\operatorname{Re} f(\omega) \;=\; \frac{1}{\pi}\,\operatorname{p.v.}\!\!\int_{-\infty}^{\infty}\!\! d\omega'\,\frac{\operatorname{Im} f(\omega')}{\omega'-\omega},
\qquad \text{a.e. }\omega,
\label{eq:scalar-kk}
\end{equation}
equivalently $\operatorname{Im} f = -\mathcal{H}(\operatorname{Re} f)$ with the present sign convention; this is the classical scalar Kramers--Kronig relation~\cite{king2009,toll1956,kubo1957,perinelli2024}.

\emph{Finite-dimensional operator-valued lift.}
Let $\Ktilde\in\Hp{1}(\BB)$. In the fixed basis $\mathcal E$, every coordinate $f_{ab}(z)=[\Ktilde(z)]_{ab}$ is a scalar $H^1$ function and therefore has a non-tangential $L^1$ boundary value and obeys Eq.~\eqref{eq:scalar-kk} outside a null set $N_{ab}$. There are finitely many coordinates, so $N=\bigcup_{ab}N_{ab}$ is null and all scalar identities hold simultaneously on $\mathbb R\setminus N$. Finite-dimensional norm equivalence reassembles the coordinate limits into the Bochner $L^1$ boundary value and gives
\[
\operatorname{Re}_{\mathcal E}\Ktilde(\omega) \;=\; \frac{1}{\pi}\,\operatorname{p.v.}\!\!\int_{-\infty}^{\infty}\!\! d\omega'\,\frac{\operatorname{Im}_{\mathcal E}\Ktilde(\omega')}{\omega'-\omega},
\qquad \text{a.e. }\omega,
\]
where the principal value is the finite-dimensional Bochner principal value obtained by assembling the scalar limits. No additional matrix-valued singular-integral theorem is needed.
\end{proof}

\begin{proof}[Proof of Corollary~\ref{cor:subtracted}]
For $\Ktilde\in\Hp{p}(\BB)$ with $1<p<\infty$, the scalar Hardy projection theorem gives $L^p$ convergence of the truncated Hilbert transforms in every coordinate. The finite-basis argument above then gives Eq.~\eqref{eq:kk} in $L^p$ and almost everywhere after choosing a common representative. At a subtraction point $\omega_*$ satisfying the hypotheses of the corollary, apply this identity at $\omega$ and $\omega_*$ and subtract. The common boundary field is integrated against
\[
\frac{1}{\omega'-\omega}-\frac{1}{\omega'-\omega_\ast}
\;=\;\frac{\omega-\omega_\ast}{(\omega'-\omega)(\omega'-\omega_\ast)},
\]
which gives Eq.~\eqref{eq:kk-subtracted}. This derivation treats the right-hand side as the difference of two principal values. Absolute convergence, when claimed, must be checked separately.
\end{proof}

\section{State Transforms and Force-Fit Kernels}
\label{sec:robustness}

\subsection{Analyticity of the reduced-state transform}

The memory kernel is fixed by the Hamiltonian, the chosen projection, and the reference state. The inner product enters the spectral realization used to analyze that kernel, but is not an additional dependence of the exact NZ construction. Factorization of the initial state is used only to remove $\II(t)$ and obtain the homogeneous LTI equation. Realistic preparations --- thermal equilibrium of the joint system, post-measurement conditional states, and post-quench initial states --- are generally \emph{correlated}, so it is important to separate the analyticity of their state trajectories from Hardy placement of the kernel.

Here we show that the reduced-state Laplace transform is holomorphic on $\HH$ regardless of initial correlations. This is a statement about the state trajectory, not an $H^p$ placement theorem and not a zero-free theorem.

\begin{proposition}[State analyticity for arbitrary trace-class initial data]
\label{thm:robustness}
Let $H$ be a self-adjoint total Hamiltonian on $\mathcal{H}_s \otimes \mathcal{H}_b$, let $X_0$ be any trace-class operator on this joint space, and set $\sigma_{X_0}(t)=\TrOp_b[e^{-iHt}X_0e^{iHt}]$. Then $t\mapsto\sigma_{X_0}(t)$ is continuous in trace norm, $\|\sigma_{X_0}(t)\|_1\leq\|X_0\|_1$, and
\[
\tilde{\sigma}_{X_0}(z) = \int_0^\infty e^{izt}\,\sigma_{X_0}(t)\,dt
\]
is analytic in $\HH$ as a trace-class-valued function, with
\[
 \|\tilde{\sigma}_{X_0}(z)\|_\opnorm\leq\|\tilde{\sigma}_{X_0}(z)\|_1
 \leq\frac{\|X_0\|_1}{\operatorname{Im}z}.
\]
In particular, for any factorized or correlated density operator $X_0=\rho_0$, one has $\|\sigma_{\rho_0}(t)\|_1=1$ and the normalized bound $1/\operatorname{Im}z$.
\end{proposition}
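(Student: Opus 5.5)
The plan is to prove the statement in three layers: trace-norm continuity and the bound on the joint trajectory, transfer to the reduced trajectory through the partial trace, and then holomorphy of the Laplace integral as a Bochner integral in $\mathfrak S_1(\mathcal H_s)$.

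\emph{Joint trajectory.} By Stone's theorem, $U(t)=e^{-iHt}$ is a strongly continuous unitary group. Set $X(t)=U(t)X_0U(t)^*$. First, $\|X(t)\|_1=\|X_0\|_1$, since the trace norm is unitarily invariant. Second, $t\mapsto X(t)$ is trace-norm continuous. For finite-rank $X_0=\sum_k|u_k\rangle\langle v_k|$ this follows from strong continuity of $U(t)$ on the vectors $u_k,v_k$, using $\||a\rangle\langle b|\|_1=\|a\|\|b\|$. For general trace-class $X_0$, approximate in $\mathfrak S_1$ by finite-rank operators (for instance truncations of the singular value decomposition). The isometry bound makes this approximation uniform in $t$, so continuity passes to the limit by an $\varepsilon/3$ argument. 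This density step is where the real content sits. It is the main obstacle only in that one must not settle for mere strong continuity of $X(t)$.

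\emph{Reduced trajectory.} The partial trace $\TrOp_b:\mathfrak S_1(\mathcal H_s\otimes\mathcal H_b)\to\mathfrak S_1(\mathcal H_s)$ is linear and contractive in trace norm. Contractivity follows from duality: $\|\TrOp_b Y\|_1=\sup_{\|A\|\le1}|\TrOp(Y(A\otimes\mathbbm 1))|\le\|Y\|_1$. Hence $\sigma_{X_0}=\TrOp_b\circ X$ is trace-norm continuous with $\|\sigma_{X_0}(t)\|_1\le\|X_0\|_1$.

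\emph{Laplace transform.} For $\operatorname{Im}z>0$, the integrand $e^{izt}\sigma_{X_0}(t)$ is continuous and dominated by $e^{-t\operatorname{Im}z}\|X_0\|_1$. It is therefore Bochner integrable in $\mathfrak S_1(\mathcal H_s)$, and
\[
\|\tilde\sigma_{X_0}(z)\|_1\le\|X_0\|_1\int_0^\infty e^{-t\operatorname{Im}z}\,dt=\frac{\|X_0\|_1}{\operatorname{Im}z}.
\]
For holomorphy, fix $z_0\in\HH$ and a disc of radius $\delta<\operatorname{Im}z_0/2$. On this disc $|e^{izt}|\le e^{-t\operatorname{Im}z_0/2}$. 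The candidate derivative $\int_0^\infty it\,e^{izt}\sigma_{X_0}(t)\,dt$ converges absolutely. Dominated convergence applied to the difference quotients, using $|e^{iht}-1-iht|\le |h|^2t^2e^{|h|t}/2$, shows that the difference quotient converges to it in trace norm. An equivalent route is Morera's theorem combined with Fubini for Bochner integrals.

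\emph{Remaining bounds.} The operator-norm inequality is $\|\cdot\|_{\opnorm}\le\|\cdot\|_1$. For a density operator $\rho_0$, $X(t)$ stays positive with trace one. The partial trace preserves positivity and trace, so $\|\sigma_{\rho_0}(t)\|_1=\TrOp\,\sigma_{\rho_0}(t)=1$, which gives the normalized bound $1/\operatorname{Im}z$.
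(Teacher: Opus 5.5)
Your proposal is correct and takes essentially the same route as the paper. Unitary invariance and trace-norm contractivity of $\TrOp_b$ give the bound, Stone's theorem with finite-rank approximation gives trace-norm continuity, a dominated trace-class Bochner integral gives the $1/\operatorname{Im}z$ estimate, and differentiation under the integral gives holomorphy; you simply add detail the paper omits, namely the duality proof that $\TrOp_b$ is contractive and the explicit remainder estimate for the difference quotient.
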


\begin{proof}
Unitary conjugation preserves the trace norm, and the partial trace is trace-norm contractive, so $\|\sigma_{X_0}(t)\|_1\leq\|X_0\|_1$. For a density operator, positivity and trace preservation give equality with $1$. For unbounded $H$, Stone's theorem gives strong continuity of $U(t)=e^{-iHt}$; approximation of $X_0$ by finite-rank operators then gives trace-norm continuity of $U(t)X_0U(t)^\dagger$, and the partial trace is trace-norm continuous. For $z=x+iy$, $y>0$, the trace-class Bochner integral therefore exists and obeys
\[
 \|\tilde\sigma_{X_0}(z)\|_1\leq\int_0^\infty e^{-yt}\|\sigma_{X_0}(t)\|_1\,dt
 \leq\frac{\|X_0\|_1}{y}.
\]
Local uniform domination permits differentiation under the integral, proving trace-class-valued holomorphy; the operator-norm bound follows from $\|A\|_\opnorm\leq\|A\|_1$.
\end{proof}

\begin{remark}[Kernel vs.\ reduced-state Laplace transform]
\label{rem:kernel-vs-state}
Proposition~\ref{thm:robustness} concerns $\tilde{\sigma}(z)$, not the memory kernel.
Strictly, $\tilde{\sigma}(z)$ is the Laplace transform of a \emph{state trajectory}: it is a propagator map on the reduced initial state only when the initial state is factorized.
For correlated preparations the map from $\sigma(0)$ to $\sigma(t)$ is not well-defined without an assignment rule (and may be nonlinear or nonunique under such a rule), although $\sigma(t)$ remains linear in the full initial state $\rho_0$.
The NZ memory kernel $\KK(t)$, in contrast, depends on the Hamiltonian and on the chosen projection/reference data, but not on the particular initial state once those data are fixed. Its Hardy membership is a property of the projected construction together with the spectral realization used to represent it, not of $H$ alone.
The initial state enters the exact reduced equation through the rephased inhomogeneous term $\II(t) = \Proj\LL\Qproj\,e^{-i\Qproj\LL\Qproj t}\,\Qproj\rho_0$, appearing as $-i\II(t)$ in Eq.~\eqref{eq:gqme}. Holomorphy does not forbid $\tilde\sigma(z)$ from having zeros in $\HH$.
\end{remark}

\subsection{Force-fit kernels with initial correlations}
\label{sec:eff-kernel}

Proposition~\ref{thm:robustness} shows that the reduced-state Laplace transform is holomorphic in $\HH$, although it may have zeros there.
A separate question is what happens to the \emph{memory kernel} when an experimentalist (or numerical extractor) force-fits the correlated dynamics into a homogeneous master equation, discarding the inhomogeneous term $\II(t)$.
This is the object produced by that homogeneous fitting procedure; it is the object whose dispersion diagnostic can differ from that of the exact kernel, and it is not protected by the state-boundedness theorem.

\subsubsection{Force-fit definition}

Starting from the exact inhomogeneous GQME
\begin{equation}
\frac{d}{dt}\sigma(t) = -i\LL_s\sigma(t) -i(\KK * \sigma)(t) -i\II(t)
\label{eq:exact-gqme}
\end{equation}
fix a connected open domain $D\subseteq\HH$ on which the transforms $\Ktilde$, $\tilde{\II}$, and $\tilde\sigma$ (or the stacked matrix $\tilde S$ below) exist and are holomorphic. All force-fit identities in this subsection are statements on $D$; existence of $\tilde{\II}$ on the whole upper half-plane is not inferred from state boundedness or from Theorem~\ref{thm:hardy}. Demanding a purely homogeneous form $\dot{\sigma} = -i\LL_s\sigma-i\KK_\mathrm{eff}*\sigma$, Laplace transformation (with the convention $\tilde f(z)=\int_0^\infty e^{izt}f(t)\,dt$, so that $\widetilde{\dot f}(z)=-iz\tilde f(z)-f(0)$) gives, in a scalar channel,
\begin{equation}
\boxed{\quad\Ktilde_\mathrm{eff}(z) = \Ktilde(z) + \tilde{\II}(z)\,\tilde{\sigma}(z)^{-1},\quad}
\label{eq:Keff}
\end{equation}
where the inverse means division by the scalar channel transform.
For a full operator-valued reconstruction, let $d=\dim\mathcal B_{\mathrm{HS}}(\mathcal H_s)$ and choose $d$ linearly independent trajectories. With vectorization in a fixed Liouville basis, set
\[
  \tilde S(z)=\bigl[\tilde\sigma^{[1]}(z)\ \cdots\ \tilde\sigma^{[d]}(z)\bigr]\in\mathbb C^{d\times d},\qquad
  \tilde{\II}_{\rm mat}(z)=\bigl[\tilde{\II}^{[1]}(z)\ \cdots\ \tilde{\II}^{[d]}(z)\bigr]\in\mathbb C^{d\times d},
\]
where each column is the vectorized state or inhomogeneity transform for the corresponding preparation. The matrix force-fit identity is then
\[
  \Ktilde_\mathrm{eff}(z)=\Ktilde(z)+\tilde{\II}_{\rm mat}(z)\tilde S(z)^{-1}.
\]
Assume that $\det\tilde S$ is not identically zero, equivalently that $\tilde S$ is invertible at least at one point of $D$. The inverse is holomorphic on $\{z\in D:\det\tilde S(z)\neq0\}$ and extends meromorphically through isolated determinant zeros by $\tilde S^{-1}=\operatorname{adj}\tilde S/\det\tilde S$. At such a zero $\zeta$, write $\nu_\zeta(f)$ for the vanishing order of a holomorphic scalar entry and set $\nu_\zeta(0)=+\infty$. The $(a,b)$ entry of the correction has pole order
\[
  \max\!\left\{0,\nu_\zeta(\det\tilde S)-
  \nu_\zeta\!\left((\tilde{\II}_{\rm mat}\operatorname{adj}\tilde S)_{ab}\right)\right\}.
\]
Thus a determinant zero produces a pole only when at least one relevant adjugate numerator fails to cancel it.

\subsubsection{Perturbative expansion}

Fix the Hamiltonian, projection, reference state, and hence the exact kernel. In such a scalar channel, let $\epsilon\in\mathbb R$ parametrize a family of initial states (rather than a change of the Hamiltonian):
\[
\rho_\mathrm{fact}=\sigma_s^{(0)}\otimes\rho_b^{\mathrm{ref}},
\qquad
\rho_0(\epsilon) = \rho_\mathrm{fact} + \epsilon\,\delta\rho + \epsilon^2\,\rho^{(2)} + O(\epsilon^3).
\]
The coefficient operators are assumed trace class and Hermitian, and the family is assumed to remain positive with unit trace for sufficiently small $|\epsilon|$. The expansion is understood in a topology that gives the corresponding local-uniform transform expansion. Since $\Qproj\rho_\mathrm{fact}=0$ for this reference-factorized state,
\[
\tilde{\II}(z;\epsilon) = \epsilon\,\tilde{\II}^{(1)}(z) + \epsilon^2\,\tilde{\II}^{(2)}(z) + O(\epsilon^3),
\]
where the coefficient transforms are assumed holomorphic on $D$. This is a separate existence hypothesis on the inhomogeneity; the kernel-weight assumption in Theorem~\ref{thm:hardy} does not by itself imply it. The Taylor expansion of Eq.~\eqref{eq:Keff} gives
\begin{equation}
\Ktilde_\mathrm{eff}(z;\epsilon) = \Ktilde(z) + \epsilon\,\tilde{\II}^{(1)}(z)\,\tilde{\sigma}_0(z)^{-1} + O(\epsilon^2),
\label{eq:Keff-pert}
\end{equation}
where $\tilde{\sigma}_0(z)$ is the transform generated by $\rho_\mathrm{fact}$. More explicitly, assume $\tilde\sigma(z;\epsilon)=\tilde\sigma_0(z)+O(\epsilon)$ locally uniformly on compact subsets of $D$ (and in the boundary norm whenever the boundary version of this expansion is used). On compact sets where $\tilde\sigma_0$ is bounded away from zero, the omitted inverse correction starts at $O(\epsilon^2)$ because the inhomogeneous numerator itself is $O(\epsilon)$.

\subsubsection{The zero-of-\texorpdfstring{$\tilde{\sigma}$}{sigma} mechanism}

Both $\tilde{\II}^{(1)}(z)$ and $\tilde{\sigma}_0(z)$ are analytic in $D$. Their quotient is analytic away from the zeros of $\tilde{\sigma}_0$. At a simple zero $\zeta\in D$, the first-order correction in Eq.~\eqref{eq:Keff-pert} has a pole when $\tilde{\II}^{(1)}(\zeta)\neq0$. For a nonzero finite perturbation the zero generally moves, so an exact force-fit pole requires checking the actual zero of $\tilde\sigma(z;\epsilon)$ and the actual numerator there (or proving persistence by a separate local argument).
The CPTP obstruction of Theorem~\ref{thm:cp-hardy} applies only after the relevant rationality, local or global pencil, and propagator no-cancellation hypotheses of that theorem have also been checked.
For a matrix reconstruction the identity $\tilde S^{-1}=\operatorname{adj}(\tilde S)/\det\tilde S$ gives the same meromorphic conclusion. What remains model-dependent is the location and multiplicity of determinant zeros and whether $\tilde{\II}_{\rm mat}\,\operatorname{adj}(\tilde S)$ cancels them.

For systems with near-resonant structure, channels with no positivity constraint --- in particular, coherence components $\tilde{\sigma}_0^{(eg)}(z)$ --- can admit zeros in the open upper half-plane.
Proposition~\ref{prop:sigma-zeros} gives the exact $N_{\max}=1$ baseline after removable factors are cancelled, under the preparation hypotheses printed in the proposition. For nonzero coupling, the coherence channel has the single real zero $z=\omega_c$ when the bath state is Fock-diagonal, while the population zeros depend explicitly on the bath occupation when the vacuum sector has nonzero weight. A Fock-coherent bath is outside this baseline, and no $N_{\max}=1$ zero-placement claim is made for it. The decoupled case and the pure $|1\rangle$ bath have no finite zero in the corresponding population channel. For higher truncations, real numerator coefficients imply only conjugate pairing, not real-rootedness. The tested Fock-diagonal cases have real zeros within the working precision, whereas a displayed Fock-coherent $N_{\max}=2$ preparation has an open-upper-half-plane pair. The displayed zero was checked from an exact symbolic numerator and by direct evaluation of the independently assembled raw Liouville resolvent; the full coupling grid remains numerical evidence and is not a consequence of Proposition~\ref{prop:sigma-zeros}.
When such a zero $\zeta\in\HH$ is present, it is only a \emph{candidate} force-fit pole location. It contributes a first-order pole through $\tilde{\II}^{(1)}\tilde{\sigma}_0^{-1}$ only if $\zeta\in D$ and $\tilde{\II}^{(1)}(\zeta)\neq0$; an exact finite-perturbation pole requires the additional check just described. The factorized calculation itself has $\tilde{\II}=0$ and therefore does not exhibit an effective-kernel pole.

\begin{figure}[!tbp]
\centering
\includegraphics[width=\textwidth]{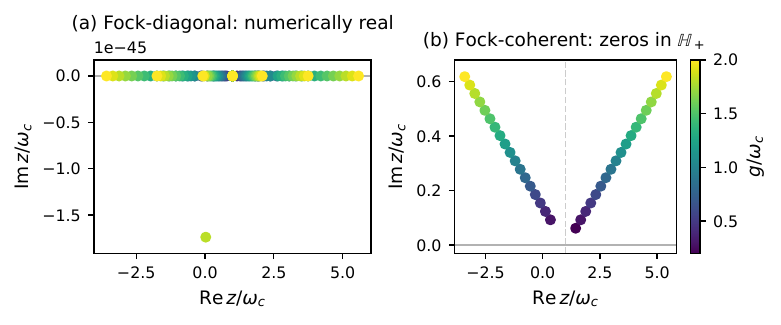}
\caption{Zeros of the coherence-channel transform $\tilde{\sigma}_0^{(eg)}(z)$ for a finite Jaynes--Cummings model ($N_{\max}=2$, $\omega_0=\omega_c=1$), with initial system state $|+\rangle\langle+|$ and a 20-point coupling grid $g/\omega_c\in[0.1,2.0]$.
 (a) \emph{Fock-diagonal bath}: the truncated thermal state with $\beta\omega_c=1$ gives zeros on the real axis within the working precision throughout this grid. This is numerical evidence; real numerator coefficients alone imply only conjugate pairing.
 (b) \emph{Fock-coherent bath} $\rho_b=\tfrac12(|0\rangle+e^{i\chi}|1\rangle)(\langle0|+e^{-i\chi}\langle1|)$, $\chi=\pi/2$ and $g=0.7$: the exact $N_{\max}=2$ numerator and high-precision roots give $\zeta\approx2.54284273+0.21645133i$ and the second upper-half-plane root $-0.54284273+0.21645133i$. Direct evaluation of the independently assembled raw $36\times36$ Liouville resolvent gives a residual of $6.52\times10^{-81}$ at 80-digit precision. These are state-transform zeros and only candidate force-fit pole locations.}
\label{fig:jc-zeros}
\end{figure}

\subsubsection{Exact baseline and numerical zero placement in the Jaynes--Cummings model}
\label{sec:zero-scan}

The $N_{\max}=1$ closed forms, their non-pole zero equivalences, and the conjugation symmetry are algebraic and have been checked in Lean. The removable-factor analysis in the degenerate cases is written explicitly below. Higher-truncation root locations are numerical. For the displayed Fock-coherent case, the root is stable from 30 to 80 digits and the independent raw-resolvent residual decreases from $7.22\times10^{-31}$ to $6.52\times10^{-81}$.

\begin{proposition}[Exact $N_{\max}=1$ zero placement]
\label{prop:sigma-zeros}
Let $\rho(0)=\sigma_s(0)\otimes\rho_b$ be factorized, with real Hamiltonian parameters $\omega_0,\omega_c,g$.
(i)~\emph{Population channel.} Let $N_{\max}=1$, let $\sigma_s(0)=|e\rangle\langle e|$, and write the normalized Fock-diagonal bath as $\rho_b=p_0|0\rangle\langle0|+p_1|1\rangle\langle1|$, where $p_0,p_1\geq0$ and $p_0+p_1=1$. On resonance, if $g\neq0$ and $p_0>0$, the meromorphic continuation of the population transform has precisely the two finite zeros determined by
\[
z^2=2g^2(1+p_1).
\]
Thus $z=\pm g\sqrt{2}$ holds only for the vacuum bath $p_1=0$. If $g=0$ or $p_0=0$, the continuation reduces to a nonzero multiple of $1/z$ and has no finite zero.
(ii)~\emph{Coherence channel.} At $N_{\max}=1$, assume in addition that the bath state is Fock-diagonal, and write $c=\langle e|\sigma_s(0)|g\rangle$. If $c$ and $g$ are both nonzero, the meromorphic continuation has the unique finite zero $z=\omega_c$, for arbitrary detuning. If $g=0$, the factor $z-\omega_c$ cancels and there is no finite zero. No conclusion is asserted here for a Fock-coherent bath.
(iii)~\emph{Conjugation symmetry.} Whenever a coherence-channel numerator polynomial has real coefficients, its zero set is closed under complex conjugation. This statement does not imply that all zeros are real.
\end{proposition}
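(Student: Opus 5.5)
The plan is to compute the reduced transforms explicitly from the block structure of the truncated Jaynes--Cummings Hamiltonian
\[
H=\omega_0|e\rangle\langle e|+\omega_c a^\dagger a+g(\sigma_+a+\sigma_-a^\dagger),
\]
truncated at $N_{\max}=1$. In the basis $\{|g,0\rangle,|e,0\rangle,|g,1\rangle,|e,1\rangle\}$ this Hamiltonian is block diagonal. The state $|g,0\rangle$ is stationary with energy $0$. The states $|e,0\rangle,|g,1\rangle$ span a $2\times2$ block $h=\begin{pmatrix}\omega_0&g\\ g&\omega_c\end{pmatrix}$ with eigenvalues $\lambda_\pm$. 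The state $|e,1\rangle$ is isolated with energy $\omega_0+\omega_c$, because its partner $|g,2\rangle$ has been truncated away.

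Since the initial state is factorized and the bath is Fock-diagonal, $\rho(0)$ is a mixture of the pure inputs $|e,n\rangle$ (for populations) or of the operators $c\,|e,n\rangle\langle g,n|$ (for coherences), with $n=0,1$. Each reduced quantity is then a finite sum of exponentials $e^{-i\mu t}$. With the convention $\int_0^\infty e^{izt}e^{-i\mu t}\,dt=i/(z-\mu)$, every transform is an explicit rational function, and it is this rational function that I take as the meromorphic continuation.

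\emph{Population channel.} The $n=1$ component $|e,1\rangle$ is stationary, so it contributes $p_1$ to $P_e(t)$. On resonance the $n=0$ component contributes $p_0\cos^2(gt)=\tfrac{p_0}{2}\bigl(1+\cos 2gt\bigr)$. Using $\int_0^\infty e^{izt}\cos(\omega t)\,dt=iz/(z^2-\omega^2)$, I obtain
\[
\tilde P_e(z)=i\,\frac{(\tfrac{p_0}{2}+p_1)(z^2-4g^2)+\tfrac{p_0}{2}z^2}{z(z^2-4g^2)}
=i\,\frac{z^2-2g^2(1+p_1)}{z(z^2-4g^2)},
\]
where the second equality uses $p_0+p_1=1$.

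The remaining step is the cancellation check. The numerator roots coincide with the denominator root $z=0$ only if $g=0$. They coincide with $z=\pm2g$ only if $1+p_1=2$, i.e.\ $p_0=0$. In these degenerate cases one checks directly that the time signal is constant, so the transform is a multiple of $1/z$ and has no finite zero. Otherwise the two zeros satisfy $z^2=2g^2(1+p_1)$, and $z=\pm g\sqrt2$ exactly when $p_1=0$.

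\emph{Coherence channel.} I would write
\[
\sigma_{eg}(t)=c\Bigl[p_0\,A(t)+p_1\,e^{-i(\omega_0+\omega_c)t}\,\overline{B(t)}\Bigr],
\qquad A=\langle e0|e^{-iht}|e0\rangle,\quad B=\langle g1|e^{-iht}|g1\rangle.
\]
The first term comes from $|e,0\rangle$ evolving in the block against the stationary $|g,0\rangle$. The second comes from the static phase of $|e,1\rangle$ against $|g,1\rangle$ evolving in the block.

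The key observation is that the second term equals $A(t)$. Three facts give this:
\begin{itemize}
\item the trace of the block gives $\omega_0+\omega_c-\lambda_\pm=\lambda_\mp$;
\item orthonormality of the two eigenvectors $v_\pm$ of the real symmetric $2\times2$ block gives $|\langle g1|v_\pm\rangle|^2=|\langle e0|v_\mp\rangle|^2$;
\item combining these, the frequencies and weights in $e^{-i(\omega_0+\omega_c)t}\overline{B(t)}$ match those in $A(t)$ term by term.
\end{itemize}
Hence $\sigma_{eg}(t)=c\,A(t)$. Its transform is the resolvent entry
\[
\tilde\sigma_{eg}(z)=ic\,\frac{z-\omega_c}{(z-\omega_0)(z-\omega_c)-g^2},
\]
valid for arbitrary detuning. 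At $z=\omega_c$ the denominator equals $-g^2\neq0$, so when $c,g\neq0$ the zero at $z=\omega_c$ is the unique finite zero. When $g=0$ the factor $z-\omega_c$ cancels and no finite zero remains.

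Part (iii) is immediate: complex conjugation of $q(\bar z)=\overline{q(z)}$ for a polynomial $q$ with real coefficients sends roots to roots.

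I expect the main obstacle to be the coherence computation: correctly tracking the $|e,1\rangle$/$|g,1\rangle$ contribution and its complex conjugate, and justifying the eigenvector-weight swap, since this is what removes the $p_1$-dependence. I would verify that step first by the trace and orthogonality identities above, and if needed by directly computing $\det(z-h)$ together with the adjugate entries.
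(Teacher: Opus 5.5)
Your proposal is correct and follows essentially the paper's route: explicit rational transforms from the $N_{\max}=1$ block structure, then the same cancellation checks at $z=0,\pm2g$ (where the paper notes $z^2-4g^2=-2g^2p_0\neq0$ at the roots) and at $z=\omega_c$ (where the denominator equals $-g^2$). Your trace/orthogonality argument that the $n=1$ coherence contribution $e^{-i(\omega_0+\omega_c)t}\overline{B(t)}$ equals $A(t)$ supplies the detail behind the paper's one-line claim that the two excitation blocks combine into the $p_1$-independent resolvent entry $ic(z-\omega_c)/[(z-\omega_0)(z-\omega_c)-g^2]$.
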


\begin{proof}
(i)~At $N_{\max}=1$, on resonance and for $\sigma_s(0)=|e\rangle\langle e|$, the population-channel zero condition away from the displayed denominators reads
\[
\frac{p_0}{2}\Bigl(\frac{1}{z}+\frac{z}{z^2-4g^2}\Bigr)+\frac{p_1}{z}=0 .
\]
Away from the displayed denominators we may multiply by $2z(z^2-4g^2)\neq0$, which gives
$p_0\bigl[(z^2-4g^2)+z^2\bigr]+2p_1(z^2-4g^2)=0$, that is, $2z^2(p_0+p_1)=4g^2(p_0+2p_1)$.
Normalization $p_0+p_1=1$ turns this into $z^2=2g^2(1+p_1)$. If $g\neq0$ and $p_0>0$, these roots are nonzero and
\[
z^2-4g^2=-2g^2p_0\neq0,
\]
so neither root was introduced by clearing a denominator. Since $2g^2(1+p_1)=2g^2$ holds exactly when $p_1=0$, the vacuum values $z=\pm g\sqrt{2}$ characterize the vacuum bath and no other occupation. If $g=0$, the two fractions in parentheses combine to $2/z$ after cancellation; if $p_0=0$, only $p_1/z=1/z$ remains. In either degenerate case the transform has no finite zero.
(ii)~With a Fock-diagonal bath, the two excitation blocks contributing to the reduced coherence combine to give
\[
\tilde\sigma_{eg}(z)=\frac{i\,c\,(z-\omega_c)}{(z-\omega_0)(z-\omega_c)-g^2},
\]
where $c=\sigma_{eg}(0)$. This is exactly the Fock-diagonal input encoded by the finite-matrix resolvent calculation. For $c\neq0$ and $g\neq0$, the denominator at $z=\omega_c$ equals $-g^2\neq0$, so the continuation vanishes there and nowhere else. If $g=0$, the factor $z-\omega_c$ cancels before zeros are counted, leaving $ic/(z-\omega_0)$ and hence no finite zero.
(iii)~A polynomial with real coefficients satisfies $p(\bar z)=\overline{p(z)}$, so its zero set is invariant under conjugation; this constrains the zeros to be real or to come in conjugate pairs, and forces neither.
\end{proof}

The banded-overlap count and the high-frequency sum rule below are exact algebraic constraints behind this placement; higher-truncation root locations are evaluated separately with 80-digit arithmetic because standard floating-point root-finding is ill-conditioned for the high-degree coherence numerator.
\emph{Banded overlap structure.} Write the eigendecomposition $\hat H=\sum_\mu E_\mu|\varphi_\mu\rangle\langle\varphi_\mu|$ and define the coherence-channel overlap $O_{\mu\nu}=\sum_n\langle e,n|\varphi_\mu\rangle\langle\varphi_\nu|g,n\rangle$. Because the rotating-wave coupling conserves total excitation number, $O_{\mu\nu}$ is nonzero only when $\varphi_\mu$ lies in excitation manifold $m{+}1$ and $\varphi_\nu$ in manifold $m$. For $N_{\max}=6$ this selects exactly $24$ of the $14^2=196$ energy-eigenstate pairs, and the transform has the partial-fraction representation
\begin{equation}
\tilde{\sigma}_{eg}(z)=\sum_{\mu\nu}a_{\mu\nu}\,\frac{1}{z-(E_\mu-E_\nu)},
\label{eq:sigma-partial-fraction}
\end{equation}
where $[\rho_0]_{\mu\nu}$ denotes the matrix element of the initial state in this energy eigenbasis and $a_{\mu\nu}=i[\rho_0]_{\mu\nu}O_{\mu\nu}$ before equal Bohr frequencies are combined. It has at most $24$ distinct simple poles, rather than all $14^2$ pairs. After terms at equal Bohr frequencies are combined and zero combined residues are removed, let $r\leq24$ be the number of surviving distinct poles. The reduced denominator then has degree $r$; the special value $r=24$ additionally requires distinct selected Bohr frequencies and a nonzero combined residue at each one.
\emph{High-frequency sum rule.} With the present $e^{izt}$ convention, integration by parts gives $\lim_{z\to\infty}z\,\tilde{\sigma}_{eg}(z)=i\sigma_{eg}(0)$ along non-tangential rays in $\HH$. Hence the residues obey
\begin{equation}
\sum_{\mu\nu}a_{\mu\nu}=i\,\sigma_{eg}(0),
\label{eq:sum-rule}
\end{equation}
When $\sigma_{eg}(0)\neq0$, the nonzero residue sum pins the numerator degree to $r-1$, one less than this reduced denominator. Thus degree $23$ follows only in a case where the additional $r=24$ conditions have been checked; it is not implied by the overlap count alone. If the initial coherence vanishes, the numerator degree can be lower. Equation~\eqref{eq:sum-rule} is therefore an extraction-independent self-consistency check: any partial-fraction reconstruction with the wrong residue sum has a sign error or a missing term, regardless of the root-finder used.
For the Fock-diagonal cases studied, the numerator polynomial has real coefficients, so its zeros are real or occur in conjugate pairs. Neither real coefficients nor the observed reflection symmetry alone proves that every zero is real. The real-rootedness seen in the tested higher truncations is therefore retained as high-precision numerical evidence. Fock-coherent preparations in the displayed $N_{\max}=2$ family numerically exhibit an upper-half-plane pair. A general symbolic classification of these higher-truncation zeros remains open.

The displayed $N_{\max}=2$ high-precision calculation shows that bath Fock coherence can move coherence zeros into the upper half-plane for that preparation. The tested Fock-diagonal cases remain real-rooted, but a proof for arbitrary truncation, occupation, coupling, and detuning is open. Their force-fit status remains the conditional one stated after Eq.~\eqref{eq:Keff-pert}. Scalar Volterra deconvolution is strongly dependent on normalization, frequency window, and Laplace shift, so no directional claim about correlation-enhanced KK residuals is made; the operator-valued shifted-line calculation in Section~\ref{sec:boundaries} is retained only as a finite-truncation consistency test.

\subsubsection*{Modified KK relation}

We now derive the residue correction in a scalar channel. For this whole-line boundary calculation, strengthen the preceding domain hypothesis to $D=\HH$. Define
\[
 \mathcal H[\phi](\omega)=\frac1\pi\operatorname{p.v.}\!\int_{\mathbb R}
 \frac{\phi(\omega')}{\omega'-\omega}\,d\omega',
\]
so that Eq.~\eqref{eq:kk} reads $\operatorname{Re}f=\mathcal H[\operatorname{Im}f]$ for an upper-half-plane Hardy boundary value. Assume that the exact kernel $\Ktilde$ has the $H^p$ boundary control of Corollary~\ref{cor:subtracted} for some $1<p<\infty$. Assume also that $\tilde{\II}^{(1)}/\tilde\sigma_0$ has only finitely many upper-half-plane poles, all simple and located at zeros $\zeta_j$ of $\tilde\sigma_0$, with $\tilde{\II}^{(1)}(\zeta_j)\neq0$. There are no other upper-half-plane singularities, and after subtracting all these principal parts the quotient remainder has the same $H^p$ boundary control. Finally, assume that Eq.~\eqref{eq:Keff-pert} holds locally uniformly on compact subsets away from the poles and to order $O(\epsilon^2)$ in the boundary norm used below. At a simple zero,
\[
 \frac{\tilde{\II}^{(1)}(z)}{\tilde\sigma_0(z)}
 =\frac{R_j}{z-\zeta_j}+h_j(z),
 \qquad
 R_j=\frac{\tilde{\II}^{(1)}(\zeta_j)}{\tilde\sigma_0'(\zeta_j)},
\]
where $h_j$ is holomorphic near $\zeta_j$. After summing the finitely many principal parts, write
\[
 \Ktilde_{\mathrm{eff}}(z;\epsilon)
 =H_\epsilon(z)+\epsilon\sum_j\frac{R_j}{z-\zeta_j}+O(\epsilon^2),
\]
with $H_\epsilon$ holomorphic in $\HH$ and satisfying the standard boundary relation. The sign can be fixed without convention guessing. For a function $g$ analytic in the upper half-plane and decaying sufficiently on the upper semicircle, the counterclockwise contour gives
\[
 \operatorname{p.v.}\!\int_{\mathbb R}\frac{g(t)}{t-\omega}\,dt=i\pi g(\omega),
 \qquad
 \operatorname{Re}g=\mathcal H[\operatorname{Im}g].
\]
For a function analytic in the lower half-plane with the same decay, the lower contour is clockwise and gives
\[
 \operatorname{p.v.}\!\int_{\mathbb R}\frac{g(t)}{t-\omega}\,dt=-i\pi g(\omega),
 \qquad
 \operatorname{Re}g=-\mathcal H[\operatorname{Im}g].
\]
For $g_j(z)=R_j/(z-\zeta_j)$, the pole lies in $\HH$, so $g_j$ is holomorphic in the lower half-plane and decays as $1/z$. The second contour identity therefore gives
\[
 \operatorname{Re}g_j=-\mathcal H[\operatorname{Im}g_j].
\]
Subtracting the Hilbert transform of the imaginary part from the real part therefore gives two copies of the pole boundary value. Hence
\begin{equation}
\operatorname{Re}[\Ktilde_\mathrm{eff}(\omega)] = \mathcal{H}\big[\operatorname{Im}\Ktilde_\mathrm{eff}\big](\omega) + \epsilon\,\Delta(\omega;\{\zeta_j\}) + O(\epsilon^2),
\label{eq:modified-kk}
\end{equation}
with the explicit correction
\[
\Delta(\omega;\{\zeta_j\}) = 2\sum_j \operatorname{Re}\!\left[\frac{\tilde{\II}^{(1)}(\zeta_j)/\tilde{\sigma}_0'(\zeta_j)}{\omega - \zeta_j}\right].
\]
Here $O(\epsilon^2)$ is not a uniform statement through the pole neighbourhoods; it is understood on compact boundary sets separated from the projected zero locations, or in the stated boundary norm after the principal parts have been removed.

Writing $R_j=\tilde{\II}^{(1)}(\zeta_j)/\tilde{\sigma}_0'(\zeta_j)$, the individual profile is
\[
2\operatorname{Re}\!\left[\frac{R_j}{\omega-\zeta_j}\right]
=\frac{2\{\operatorname{Re}R_j\,[\omega-\operatorname{Re}\zeta_j]
-\operatorname{Im}R_j\,\operatorname{Im}\zeta_j\}}
{[\omega-\operatorname{Re}\zeta_j]^2+[\operatorname{Im}\zeta_j]^2}.
\]
Thus a general complex residue produces a dispersive odd component together with a Lorentzian even component.  The profile is purely Lorentzian only when $\operatorname{Re}R_j=0$; its overall amplitude and phase set the channel-resolved force-fit KK residual.

\section{Rational Pole Obstruction and CPTP Compatibility}
\label{sec:consequences}

\subsection{CPTP obstruction from uncancelled rational poles}

\begin{theorem}[CPTP obstruction from uncancelled rational poles]
\label{thm:cp-hardy}
Let the system Liouville space be finite-dimensional and set $d=\dim\mathcal{B}_{\mathrm{HS}}(\mathcal H_s)$. Let $\Ktilde(z)$ denote a rational kernel reconstruction (for example, a Pad\'e approximant) with an isolated simple pole at $z_0 \in \HH$, $\operatorname{Im}(z_0) > 0$, and residue matrix $R$. The algebraic conclusions below concern this rational reconstruction; they do not by themselves assert that the exact NZ transform has the same pole.
Write $\Ktilde(z) = R/(z-z_0) + \Ktilde_\mathrm{reg}(z)$, with $\Ktilde_\mathrm{reg}$ analytic near $z_0$, and assume $z_0$ is not an eigenvalue of $\LL_s + \Ktilde_\mathrm{reg}(z_0)$.
On the non-pole domain, define the associated propagator transform by
\[
  \tilde{\GG}(z)=i\bigl(z-\LL_s-\Ktilde(z)\bigr)^{-1}.
\]
Choose $r \in (0,\operatorname{Im}(z_0))$ such that $z-\LL_s-\Ktilde_\mathrm{reg}(z)$ is invertible throughout the closed disc $\overline{D}(z_0,r)$, and define
\begin{equation}
n(r) \;\equiv\; \min_{|z-z_0|=r} \bigl\|(z-\LL_s-\Ktilde_\mathrm{reg}(z))^{-1}\bigr\|_\opnorm^{-1} \;>\; 0 .
\label{eq:n-of-r}
\end{equation}
\emph{(i) Local bound (Rouch\'e).}
If $\|R\| < r\,n(r)$ for some $r \in (0,\operatorname{Im}(z_0))$, then the locally cleared analytic determinant has exactly $d$ zeros, counted with multiplicity, in $D(z_0,r)\subset\HH$.
If at least one counted zero $\xi\neq z_0$ satisfies $\operatorname{adj}[A(\xi)-R]\neq0$, where $A(z)=(z-z_0)(z-\LL_s-\Ktilde_\mathrm{reg}(z))$, then $\xi$ is an upper-half-plane propagator pole.

\emph{(ii) Global Vieta criterion under explicit no-cancellation hypotheses.}
Write the proper regular part as $\Ktilde_{\mathrm{reg}}(z)=\mathsf P(z)/q(z)$, where $q$ is monic of degree $m\geq0$, $\mathsf P\equiv0$ if $m=0$, and $\deg \mathsf P\leq m-1$ if $m\geq1$. For $m\geq1$, let $q_{m-1}\in\mathbb R$ be the coefficient of $z^{m-1}$; for $m=0$, set $q_{m-1}:=0$. Define
\[
 M(z)=(z-z_0)\bigl(zq(z)-q(z)\LL_s-\mathsf P(z)\bigr)-R\,q(z).
\]
Assume $q(z_0)\neq0$ and $\det R\neq0$; assume $\det \mathsf P(\alpha)\neq0$ for every root $\alpha$ of $q$; and assume $\operatorname{adj}M(z)\neq0$ at every zero $z\in\HH$ of $\det M$. Then $\tilde{\GG}(z)$ has at least one pole in $\HH$.

Under either part, any resulting genuine pole is incompatible with the associated rational resolvent being the Fourier--Laplace transform of a strongly measurable, uniformly bounded CPTP reduced family on $t\geq0$, provided the asserted rational representation agrees with that transform on a nonempty open subset of $\HH$ away from the rational poles. Thus the CPTP conclusion applies only when the reconstruction is an exact representation on such an open set, not when it is merely an approximation. Norm-continuous microscopic reduced families satisfy the measurability requirement automatically.
\end{theorem}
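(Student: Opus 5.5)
We prove the three conclusions separately: the local count by an operator Rouché argument, the global criterion by a degree and trace (Vieta) computation, and the CPTP obstruction by comparing growth of the transform near a pole with the a priori bound of Proposition~\ref{thm:robustness}.

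\emph{Part (i).} Multiply the inverse propagator by $(z-z_0)$ to obtain the locally cleared analytic matrix function
\[
(z-z_0)\bigl(z-\LL_s-\Ktilde(z)\bigr)=A(z)-R.
\]
On $\overline D(z_0,r)$ the function $A$ is invertible except at $z_0$. There $\det A$ has a zero of order exactly $d$, because $A(z)=(z-z_0)B(z)$ with $B(z)=z-\LL_s-\Ktilde_{\rm reg}(z)$ invertible on the closed disc. On the circle $|z-z_0|=r$ we have
\[
\|A(z)^{-1}R\|\le \frac{\|R\|}{r\,n(r)}<1 .
\]
Hence $\det(A-sR)=\det A\,\det(1-sA^{-1}R)$ is zero-free on the circle for every $s\in[0,1]$. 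The argument-principle integral of $\det(A-sR)$ over the circle is therefore an integer that varies continuously in $s$, so it is constant. Its value at $s=0$ is $d$, which gives exactly $d$ zeros in $D(z_0,r)$, a disc contained in $\HH$ because $r<\operatorname{Im}z_0$.

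To pass from a determinant zero to a propagator pole, take a zero $\xi\ne z_0$ and write
\[
\tilde\GG=i(z-z_0)\operatorname{adj}(A-R)/\det(A-R).
\]
The factor $(z-z_0)$ does not vanish at $\xi$, so if $\operatorname{adj}(A(\xi)-R)\ne0$ some entry has a nonzero numerator over a vanishing denominator. That entry has a pole at $\xi$.

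\emph{Part (ii).} Here $M(z)$ is a matrix polynomial with leading term $z^{m+2}\mathbb 1$, so $\det M$ is a polynomial of degree $d(m+2)$ with leading coefficient $1$. The sum of its roots equals minus the coefficient of $z^{d(m+2)-1}$, which is $-\operatorname{tr}$ of the coefficient of $z^{m+1}$ in $M$. Expanding $(z-z_0)(zq-q\LL_s-\mathsf P)-Rq$, that coefficient is $(q_{m-1}-z_0)\mathbb 1-\LL_s$. The contributions of $\mathsf P$ and $Rq$ drop out for degree reasons. The root sum is therefore
\[
d z_0 - d q_{m-1} + \operatorname{tr}\LL_s .
\]
The trace of the commutator superoperator $\operatorname{tr}[H_s^{\rm eff},\cdot\,]$ vanishes, since it equals $\operatorname{tr}(H)\operatorname{tr}(\mathbb 1)-\operatorname{tr}(\mathbb 1)\operatorname{tr}(H)=0$. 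With $q_{m-1}$ real, the imaginary part of the root sum is $d\operatorname{Im}z_0>0$, so some root $\xi$ lies in $\HH$.

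It remains to show that $\xi$ is a genuine propagator pole. On the non-pole domain,
\[
\bigl(z-\LL_s-\Ktilde\bigr)^{-1}=(z-z_0)q(z)\,M(z)^{-1}.
\]
The hypotheses are meant to prevent spurious zeros created by clearing denominators. Consider first $\xi=z_0$: there $M(z_0)=-Rq(z_0)$ is invertible because $\det R\ne0$ and $q(z_0)\ne0$, so $\xi\ne z_0$. Next consider $\xi$ a root $\alpha$ of $q$: there $M(\alpha)=-(\alpha-z_0)\mathsf P(\alpha)$ is invertible, so $\xi\ne\alpha$. Thus the prefactor $(z-z_0)q(z)$ is nonzero at $\xi$. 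The adjugate hypothesis then gives an entry with a pole, exactly as in part (i).

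\emph{CPTP incompatibility.} This is the step I expect to need the most care. Suppose $\|\Phi_t\|\le C$ and $\Phi$ is strongly measurable. Then $\int_0^\infty e^{izt}\Phi_t\,dt$ is holomorphic on all of $\HH$ and bounded by $C/\operatorname{Im}z$, by the argument of Proposition~\ref{thm:robustness} applied in finite dimension. The rational resolvent agrees with this transform on a nonempty open set. The domain $\HH$ minus the finitely many rational poles is connected, so the identity theorem extends the agreement to all of it. Near the genuine pole $\xi$, the rational function is unbounded. The transform, however, is holomorphic at $\xi$ and hence bounded nearby, which is a contradiction.

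The main obstacle is the bookkeeping that the pole is genuine. This means checking that the determinant zero is neither an artefact of clearing denominators nor cancelled by the adjugate, together with the connectedness and identity-theorem step.
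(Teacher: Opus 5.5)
Your proposal is correct and follows the paper's proof essentially step for step. Both arguments use the same ingredients:
\begin{itemize}
\item the homotopy $s\mapsto\det(A(z)-sR)$ with the argument principle for part (i);
\item the Vieta computation from the trace of the $z^{m+1}$ coefficient of $M$, with $\operatorname{Tr}\LL_s=0$, and the no-clearing checks $M(z_0)=-Rq(z_0)$ and $M(\alpha)=-(\alpha-z_0)\mathsf P(\alpha)$ for part (ii);
\item the identity-theorem and boundedness contradiction for the CPTP conclusion.
\end{itemize}
The only cosmetic difference is that you get non-vanishing on the circle from $\|A(z)^{-1}R\|<1$ and the factorization $\det(A-sR)=\det A\,\det(1-sA^{-1}R)$, whereas the paper uses a Neumann-series bound on $(A-R)^{-1}$.
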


\begin{proof}
\emph{Finite dimension.}
With $\dim\mathcal{H}_s < \infty$, the effective Liouvillian $\LL_s + \Ktilde(z)$ acts on a finite-dimensional Liouville space; poles of $\tilde{\GG}(z) = i(z-\LL_s-\Ktilde(z))^{-1}$ are tracked through zeros of $D(z) \equiv \det(z-\LL_s-\Ktilde(z))$.
Write $A(z) = (z-z_0)\bigl(z-\LL_s-\Ktilde_\mathrm{reg}(z)\bigr)$, so that $(z-z_0)\bigl[z-\LL_s-\Ktilde(z)\bigr] = A(z)-R$.
For the local argument set $G_{\mathrm{loc}}(z)=\det(A(z)-R)=(z-z_0)^dD(z)$. This is analytic on a neighbourhood of the closed disc. Its zeros away from $z_0$ are denominator zeros of the propagator; they are genuine poles only after the numerator/adjugate check.

\emph{Part (i): Rouch\'e bound on the circle $C_r = \{|z-z_0|=r\}$.}
For every $z \in C_r$ the Neumann series gives
\[
\bigl\|\bigl[A(z)-R\bigr]^{-1}\bigr\|_\opnorm
\;=\; \bigl\|A(z)^{-1}\bigl(I-RA(z)^{-1}\bigr)^{-1}\bigr\|_\opnorm
\;\leq\; \frac{\|A(z)^{-1}\|_\opnorm}{1-\|R\|\,\|A(z)^{-1}\|_\opnorm}
\]
provided $\|R\| < \|A(z)^{-1}\|_\opnorm^{-1}$.
Now
\[
\|A(z)^{-1}\|_\opnorm^{-1}
= |z-z_0|\,\bigl\|(z-\LL_s-\Ktilde_\mathrm{reg}(z))^{-1}\bigr\|_\opnorm^{-1}
= r\,\bigl\|(z-\LL_s-\Ktilde_\mathrm{reg}(z))^{-1}\bigr\|_\opnorm^{-1}
\geq r\,n(r) .
\]
Hence, if $\|R\| < r\,n(r)$, then $\|R\| < \|A(z)^{-1}\|_\opnorm^{-1}$ for every $z \in C_r$, and consequently $A(z)-R$ is invertible on $C_r$; i.e.\ $G_{\mathrm{loc}}(z)\neq 0$ on $C_r$.
Because $G_{\mathrm{loc}}(z) = \det(A(z)-R)$ depends continuously on $R$, the map $s \mapsto G_{\mathrm{loc},s}(z) = \det(A(z)-sR)$ is a homotopy with no zeros on $C_r$ for $s\in[0,1]$.
By the argument principle, $G_{\mathrm{loc},1}(z)=G_{\mathrm{loc}}(z)$ and $G_{\mathrm{loc},0}(z)=\det(A(z))$ have the same number of zeros inside $D(z_0,r)$.
But $G_{\mathrm{loc},0}(z) = (z-z_0)^{d}\det(z-\LL_s-\Ktilde_\mathrm{reg}(z))$ with $d = \dim\mathcal{B}_\mathrm{HS}(\mathcal{H}_s)$.
By the closed-disc regularity assumption, $\det(z-\LL_s-\Ktilde_\mathrm{reg}(z))\neq 0$ on $\overline{D}(z_0,r)$; therefore $G_{\mathrm{loc},0}$ has exactly $d$ zeros inside $D(z_0,r)$, all at $z_0$.
Thus $G_{\mathrm{loc}}(z)$ also has exactly $d$ zeros inside $D(z_0,r) \subset \HH$.
For a counted zero $\xi\neq z_0$, the identity
\[
 \tilde{\GG}(z)=i(z-z_0)\frac{\operatorname{adj}[A(z)-R]}{\det[A(z)-R]}
\]
shows that $\operatorname{adj}[A(\xi)-R]\neq0$ is a sufficient no-cancellation test for a genuine propagator pole.

\emph{Part (ii): Global existence via Vieta's formula.}
Use the reduced proper representation from part~(ii), $\Ktilde_\mathrm{reg}(z)=\mathsf P(z)/q(z)$ with $\deg q=m$, $q$ monic, and $\deg \mathsf P\leq m-1$.
Multiplying the locally cleared determinant by $q(z)^{d}$ eliminates the remaining denominator:
\[
q(z)^{d}G_{\mathrm{loc}}(z)=[(z-z_0)q(z)]^dD(z)
= \det\!\bigl[(z-z_0)\bigl(zq(z)-q(z)\LL_s-\mathsf P(z)\bigr)-R\,q(z)\bigr] \equiv \det M(z) .
\]
$M(z)$ is a matrix polynomial of degree $m+2$; its determinant is a scalar polynomial of degree $d(m+2)$.
Expanding the leading terms,
\[
M(z) = z^{m+2}I - (z_0+\LL_s-q_{m-1}I)z^{m+1} + O(z^{m}) ,
\]
where $q_{m-1}$ has the convention stated in part~(ii), including $q_{m-1}=0$ for $m=0$.
Vieta's formula gives the sum of \emph{all} $d(m+2)$ zeros of $\det M(z)$:
\begin{equation}
\sum_{j=1}^{d(m+2)} z_j = \operatorname{Tr}(z_0+\LL_s-q_{m-1}I) = d\,z_0 + \operatorname{Tr}(\LL_s) - d\,q_{m-1} .
\label{eq:vieta-all}
\end{equation}
At a root $\alpha$ of $q$ the first term of $M$ vanishes and $M(\alpha) = -(\alpha - z_0)\mathsf P(\alpha)$.
The hypotheses $q(z_0)\neq0$ and $\det \mathsf P(\alpha)\neq0$ therefore make $M(\alpha)$ invertible at every root of $q$. At $z=z_0$, $M(z_0)=-R\,q(z_0)$ is invertible because $\det R\neq0$. Thus neither the pole location $z_0$ nor a root of $q$ is a zero introduced by clearing denominators.

Two facts close the coefficient calculation. First, $\operatorname{Tr}(\LL_s) = 0$ identically.
In the Hilbert-Schmidt basis $\{E_{ij} = |i\rangle\langle j|\}$, $\langle\!\langle E_{ij},\LL_s E_{ij}\rangle\!\rangle_\mathrm{HS} = \langle i|H_s|i\rangle - \langle j|H_s|j\rangle$; summing over $i,j$ gives $0$ independent of $H_s$.
Second, $q_{m-1}\in\mathbb R$ by the explicit hypothesis of part~(ii).
\emph{The imaginary-part budget.} Taking imaginary parts of Eq.~\eqref{eq:vieta-all},
\begin{equation}
\sum_{j=1}^{d(m+2)} \operatorname{Im}(z_j) \;=\; d\,\operatorname{Im}(z_0) \;>\; 0 ,
\label{eq:vieta-propagator}
\end{equation}
so at least one zero of $\det M$ lies in $\HH$. The no-clearing-artifact and no-adjugate-cancellation hypotheses convert at least one such zero into a genuine propagator pole.
The inequality is purely algebraic: a fixed positive amount of imaginary part, $d\operatorname{Im}(z_0)$, is distributed among the $d(m+2)$ zeros of the pencil, and no distribution can place all of them in the closed lower half-plane.
Indeed, for such a zero $z_\ast$ the scalar factor $(z_\ast-z_0)q(z_\ast)$ is nonzero and $\operatorname{adj}M(z_\ast)\neq0$. Hence at least one entry of
\[
 \tilde{\GG}(z)=i(z-z_0)q(z)\,\frac{\operatorname{adj}M(z)}{\det M(z)}
\]
has a non-removable pole at $z_\ast$. If $R$ is singular, $\mathsf P(\alpha)$ is singular at a denominator root, or the adjugate vanishes at every UHP zero, the Vieta identity still locates polynomial zeros but does not by itself prove a propagator pole; these cases are deliberately outside part~(ii).

\emph{CPTP contradiction.}
If $\{\Lambda_t\}_{t\geq0}$ is strongly measurable and uniformly bounded in finite dimension, then each scalar channel $t\mapsto[\Lambda_t]_{ab}$ is measurable and bounded. Exponential damping makes its one-sided Fourier--Laplace integral holomorphic throughout $\HH$, with a bound proportional to $1/\operatorname{Im}z$. If the rational resolvent is claimed to represent this reduced family, the two transforms agree by construction on a nonempty open subset of $\HH$ where the resolvent formula is defined (for example, a right half-plane of convergence). The identity theorem then makes them equal on the connected common domain. The bounded transform extends holomorphically across any putative UHP pole, whereas the rational expression has a non-removable singularity there, which is impossible. Every CPTP map is trace-norm contractive, $\|\Lambda_t\|_{1\to1}=1$, so an uncancelled UHP pole is incompatible with a strongly measurable CPTP reduced family. This transform-level argument avoids any assumption about termwise Bromwich inversion.
\end{proof}

\begin{corollary}[Explicit local residue bound]
\label{cor:cp-hardy-explicit}
Set $N \equiv \|(z_0-\LL_s-\Ktilde_\mathrm{reg}(z_0))^{-1}\|_\opnorm$ and $r=\operatorname{Im}(z_0)/2$. If
\[
 n(r)\geq\frac{1}{2N},\qquad \|R\|<\frac{\operatorname{Im}(z_0)}{4N},
\]
then $\|R\|<r\,n(r)$ and conclusion (i) of Theorem~\ref{thm:cp-hardy} holds.
\end{corollary}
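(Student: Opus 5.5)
The plan is to show that the explicit choice $r=\operatorname{Im}(z_0)/2$, together with the two displayed inequalities, already places us inside the hypotheses of part~(i) of Theorem~\ref{thm:cp-hardy}. Part~(i) then gives the conclusion with no further work. Three items need checking. First, $r$ lies in the admissible interval $(0,\operatorname{Im}(z_0))$. Second, the closed-disc invertibility assumption used to define $n(r)$ in Eq.~\eqref{eq:n-of-r} holds. Third, the strict inequality $\|R\|<r\,n(r)$ holds.

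The first item is immediate from $\operatorname{Im}(z_0)>0$. For the third, chain the two hypotheses:
\[
 r\,n(r)\;\geq\;\frac{\operatorname{Im}(z_0)}{2}\cdot\frac{1}{2N}\;=\;\frac{\operatorname{Im}(z_0)}{4N}\;>\;\|R\|.
\]
This is a one-line computation. The nondegeneracy assumption of Theorem~\ref{thm:cp-hardy}, that $z_0$ is not an eigenvalue of $\LL_s+\Ktilde_\mathrm{reg}(z_0)$, ensures that $N$ is finite and positive. So the bound $\operatorname{Im}(z_0)/(4N)$ is a genuine positive number, and the hypothesis on $\|R\|$ is not vacuous.

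The only point needing care is the second item. Part~(i) presupposes that $z-\LL_s-\Ktilde_\mathrm{reg}(z)$ is invertible on all of $\overline D(z_0,r)$, not only on the circle $|z-z_0|=r$. I would read the hypothesis $n(r)\geq 1/(2N)$ as including the statement that the minimum in Eq.~\eqref{eq:n-of-r} is taken over an invertible family, which gives invertibility on the circle. If interior invertibility must also be justified from this condition alone, I would argue by perturbation. For $z$ in the disc, write
\[
 z-\LL_s-\Ktilde_\mathrm{reg}(z)=\bigl(z_0-\LL_s-\Ktilde_\mathrm{reg}(z_0)\bigr)+E(z),
\]
so that the Neumann series applies wherever $\|E(z)\|<1/N$. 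Alternatively, one could simply state the closed-disc invertibility as part of the hypothesis on $n(r)$, as the corollary implicitly does by referring to $n(r)$ as defined in the theorem. This interpretive step is the main obstacle. Once it is settled, conclusion~(i) follows verbatim.
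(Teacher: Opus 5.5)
Your proof is correct and is the paper's argument: the chain $r\,n(r)\geq \tfrac{\operatorname{Im}(z_0)}{2}\cdot\tfrac{1}{2N}=\tfrac{\operatorname{Im}(z_0)}{4N}>\|R\|$, followed by a direct appeal to Theorem~\ref{thm:cp-hardy}(i). Your second reading of the closed-disc invertibility, namely that it is implicit in $n(r)$ being defined, is the right one and matches the paper; the Neumann-series route would not close, because a lower bound on the circle does not exclude interior zeros of $\det\bigl(z-\LL_s-\Ktilde_\mathrm{reg}(z)\bigr)$, and the hypotheses give no bound on $E(z)$.
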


\begin{proof}
With $r=\operatorname{Im}(z_0)/2$, the two assumed bounds give $r\,n(r)\geq\operatorname{Im}(z_0)/(4N)>\|R\|$. Theorem~\ref{thm:cp-hardy}(i) applies.
\end{proof}

\begin{remark}[Local vs.\ global information]
\label{rem:local-global}
Part~(i) gives stronger \emph{local} information: it places exactly $d$ denominator zeros inside a prescribed disc $D(z_0,r)$, but requires the small-residue hypothesis $\|R\| < r\,n(r)$ and a no-cancellation check before those zeros are called propagator poles.
Part~(ii) is weaker (only one pole in $\HH$, with no control on its location) and uses stronger global non-cancellation assumptions at $z_0$, the roots of $q$, and the UHP zeros of the polynomial pencil.
For a correlation expansion the residue may be $O(\epsilon)$, but the local theorem applies only after $n(r)$, the residue bound, and the no-cancellation condition have been checked for the reconstruction at hand.
The global result is an algebraic existence statement; a concrete rational reconstruction still requires each explicit non-cancellation check in the theorem.
\end{remark}

\begin{remark}[CPTP vs.\ signal-causality]
An upper-half-plane kernel pole that, after the theorem's stated clearing, denominator, and adjugate no-cancellation tests, becomes a genuine propagator pole in the rational reconstruction is incompatible with the boundedness required of a CPTP reduced family, provided that reconstruction agrees with the reduced-family transform on the theorem's nonempty open set; this is not the signal-propagation sense of causality (which would require super-luminal communication).
The bridge to macroscopic causality is this: CPTP is a necessary consequence of factorized unitary system--bath dynamics under partial trace, so a reconstructed homogeneous resolvent with an uncancelled UHP pole cannot coincide with that exact CPTP reduced family on the open set required in the theorem.
Theorem~\ref{thm:cp-hardy} thus reads such a genuine propagator pole as incompatibility with a CPTP reduction of unitary system--bath dynamics, not as a local super-luminal effect.
\end{remark}

\section{Further Analytic Consequences}
\label{sec:further}

\subsection{Positive Projected Weights and the Anti-Herglotz Sign}

\begin{proposition}[Positive projected weight and the anti-Herglotz sign]
\label{thm:passivity}
Assume the spectral hypotheses of Theorem~\ref{thm:hardy} and suppose the coupling-weighted density is operator-positive, $w(\lambda)\geq0$ as a bounded operator on $\mathcal{B}_{\mathrm{HS}}(\mathcal H_s)$ for almost every $\lambda$. Then $\Ktilde\in\Hp{p}(\BB)$ for every $p\in(1,p_0]$ and obeys the corresponding KK relations. Moreover,
\begin{equation}
 \operatorname{Im}\langle\!\langle\xi,\Ktilde(z)\xi\rangle\!\rangle\leq0,
 \qquad z\in\HH,\quad \xi\in\mathcal{B}_{\mathrm{HS}}(\mathcal H_s),
\label{eq:passivity}
\end{equation}
so the anti-Hermitian operator part $\operatorname{Im}_{\mathrm{op}}\Ktilde=(\Ktilde-\Ktilde^\dagger)/(2i)$ is negative semidefinite. Thus $-\Ktilde$ is operator-valued Herglotz. This operator imaginary part is distinct from the entrywise quantity $\operatorname{Im}_{\mathcal E}$ used in the KK formulas.
\end{proposition}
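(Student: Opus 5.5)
The Hardy and KK parts of the statement need no new work. Theorem~\ref{thm:hardy} uses only the Cauchy representation~\eqref{eq:spectral-rep}, the absence of a singular part, and $w\in L^1\cap L^{p_0}$. It therefore gives $\Ktilde\in\Hp{p}(\BB)$ for every $p\in(1,p_0]$ without any positivity input. Corollary~\ref{cor:subtracted} then supplies the principal-value and once-subtracted KK relations in the $L^p$ sense. So the only new content is the sign inequality~\eqref{eq:passivity} and its operator reformulation, and I will prove it directly from the Cauchy representation.

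\emph{Step 1: reduce to a scalar Cauchy transform.}
Fix $z=x+iy\in\HH$ and $\xi\in\mathcal B_{\mathrm{HS}}(\mathcal H_s)$. Because $w\in L^1(\mathbb R;\BB)$ and $|z-\lambda|^{-1}\leq 1/y$, the Bochner integral in~\eqref{eq:spectral-rep} converges absolutely in operator norm. The bounded linear functional $A\mapsto\langle\!\langle\xi,A\xi\rangle\!\rangle$ therefore commutes with it, giving
\[
\langle\!\langle\xi,\Ktilde(z)\xi\rangle\!\rangle=\int_{\mathbb R}\frac{\mu_\xi(\lambda)}{z-\lambda}\,d\lambda,
\qquad \mu_\xi(\lambda)=\langle\!\langle\xi,w(\lambda)\xi\rangle\!\rangle .
\]
The hypothesis $w(\lambda)\geq0$ a.e.\ means $\mu_\xi\geq0$ a.e. Moreover $\mu_\xi$ is real and integrable, since $|\mu_\xi|\leq\|w\|_\opnorm\|\xi\|^2$.

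\emph{Step 2: sign computation.}
From $\operatorname{Im}(z-\lambda)^{-1}=-y/((x-\lambda)^2+y^2)$ I get
\[
\operatorname{Im}\langle\!\langle\xi,\Ktilde(z)\xi\rangle\!\rangle=-\int_{\mathbb R}\frac{y\,\mu_\xi(\lambda)}{(x-\lambda)^2+y^2}\,d\lambda\leq0 .
\]
Taking the imaginary part inside the integral is legitimate because $\mu_\xi$ is real. This is~\eqref{eq:passivity}.

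\emph{Step 3: operator form.}
Set $\operatorname{Im}_{\mathrm{op}}\Ktilde=(\Ktilde-\Ktilde^\dagger)/(2i)$, which is self-adjoint. For any bounded $A$,
\[
\langle\!\langle\xi,\tfrac{A-A^\dagger}{2i}\xi\rangle\!\rangle=\operatorname{Im}\langle\!\langle\xi,A\xi\rangle\!\rangle .
\]
Step~2 therefore shows $\operatorname{Im}_{\mathrm{op}}\Ktilde(z)\leq0$ for every $z\in\HH$. Since $\Ktilde$ is holomorphic on $\HH$ (already given by Theorem~\ref{thm:hardy}), $-\Ktilde$ is holomorphic with nonnegative operator imaginary part, i.e.\ operator-valued Herglotz. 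Each scalar form $-\langle\!\langle\xi,\Ktilde\xi\rangle\!\rangle$ lies in $\mathcal N$, with $\alpha=0$ and measure $\mu_\xi\,d\lambda$ in~\eqref{eq:nevanlinna}.

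\emph{Step 4: remark on the entrywise part.}
I will note that $\operatorname{Im}_{\mathcal E}$ takes imaginary parts of matrix entries, whereas $\operatorname{Im}_{\mathrm{op}}$ is the anti-Hermitian part. The two agree only on diagonal entries in the basis $\mathcal E$, so the sign statement does not transfer to off-diagonal entries.

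\emph{Main obstacle.}
None of the steps is difficult. The only point needing care is the exchange of the quadratic-form functional with the Bochner integral in Step~1, which is justified by the $L^1$ bound above.
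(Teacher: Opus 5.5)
Your proposal is correct and follows essentially the same route as the paper. The Hardy and KK conclusions are inherited from Theorem~\ref{thm:hardy} and Corollary~\ref{cor:subtracted}, and the sign comes from the scalar computation $\operatorname{Im}\langle\!\langle\xi,\Ktilde(z)\xi\rangle\!\rangle=-\int_{\mathbb R} y\,w_\xi(\lambda)\,[(\lambda-x)^2+y^2]^{-1}\,d\lambda\leq0$, which is then read as $\operatorname{Im}_{\mathrm{op}}\Ktilde(z)\leq0$. Your added details, namely commuting the quadratic form with the Bochner integral, the reality of $\mu_\xi$, and the explicit Nevanlinna data, fill in steps the paper leaves implicit.
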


\begin{proof}
Under the spectral representation~\eqref{eq:spectral-rep}, a positive coupling-weighted density gives the scalar check
\begin{equation}
\operatorname{Im}\langle\!\langle \xi, \Ktilde(z) \xi \rangle\!\rangle = -\int_{-\infty}^{\infty} \frac{y}{(\lambda - x)^2 + y^2}\, w_\xi(\lambda)\,d\lambda \leq 0,
\end{equation}
where $w_\xi(\lambda)=\langle\!\langle \xi,w(\lambda)\xi\rangle\!\rangle\geq0$ and $z=x+iy$.
Since the inequality holds for every $\xi$, it is equivalent to $\operatorname{Im}_{\mathrm{op}}\Ktilde(z)\leq0$. The Hardy membership and KK relations follow independently from Theorem~\ref{thm:hardy}; positivity supplies the Nevanlinna sign structure, not a replacement for the $L^p$ spectral hypothesis.
\end{proof}

Thus positivity adds the anti-Herglotz sign to the independently assumed Hardy placement. The proposition does not claim that microscopic passivity alone makes the projected weight positive or supplies its global $L^p$ control.

\subsection{Determinacy for Bounded Finite Truncations}

The tensorial MKCT equation~\cite{liu2026mkct} for a basis-correlation matrix is
\begin{equation}
\frac{d}{dt}\bm{\mathcal{C}}(t) = \bm{\Omega}_1 \bm{\mathcal{C}}(t) + \int_0^t d\tau\, \bm{\KK}(\tau) \bm{\mathcal{C}}(t-\tau),
\label{eq:mkct}
\end{equation}
with moments $\bm{\Omega}_n = [((i\LL)^n \hat{\phi}_i, \hat{\phi}_j)]_{ij}$.

\begin{theorem}[Moment determinacy for bounded finite truncations]
\label{thm:carleman}
Consider a finite total Hilbert-space truncation in which the projected moment sequence is generated by a bounded self-adjoint spectral problem with a positive, compactly supported matrix-valued measure on the real axis. The measure may be atomic; no absolute-continuity or $L^p$ hypothesis is imposed here.
Because the MKCT convention in Eq.~\eqref{eq:mkct} uses $(i\LL)^n$, the positive Hamburger moments are the sign-corrected even moments
\[
\bm{M}_{2n}=(-1)^n\bm{\Omega}_{2n},
\]
which satisfy $\bm{M}_{2n}\geq0$ on the relevant Liouville-space sector, so the diagonal scalar moments $\langle\!\langle\xi,\bm{M}_{2n}\xi\rangle\!\rangle$ are moments of positive scalar Hamburger measures.
With the convention that a term with $\|\bm\Omega_{2n}\|_\opnorm=0$ is $+\infty$, the Carleman criterion
\begin{equation}
\sum_{n=1}^{\infty} \|\bm{\Omega}_{2n}\|_\opnorm^{-1/(2n)} = \infty
\label{eq:carleman}
\end{equation}
holds, the diagonal moment problems are determinate, and the full finite-dimensional matrix-valued spectral measure is unique.
This is a determinacy diagnostic for Pad\'e or MKCT reconstructions.
It does not by itself prove Hardy membership or convergence of every finite-order Pad\'e sequence; those properties still require the $L^p$ spectral-density assumptions of Theorem~\ref{thm:hardy} or a separate Markov/Stieltjes Pad\'e convergence theorem for the particular model.
\end{theorem}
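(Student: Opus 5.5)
The plan is to reduce the matrix-valued determinacy claim to scalar Hamburger problems and then invoke the classical Carleman theorem. The proof will proceed in three steps: establish the Carleman series, obtain scalar determinacy, and lift back to the matrix measure.

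\emph{Step 1: moments and their growth.} In the finite truncation, write the bounded self-adjoint spectral problem as $\bm\Omega_n=\int\lambda^n\,d\bm\mu(\lambda)$ up to the phase $i^n$ fixed by the $(i\LL)^n$ convention. This gives $\bm M_{2n}=(-1)^n\bm\Omega_{2n}=\int\lambda^{2n}\,d\bm\mu\geq0$. Let $\Lambda=\sup\{|\lambda|:\lambda\in\operatorname{supp}\bm\mu\}<\infty$, which is finite because the measure has compact support (in finite dimension it is at most $\|\LL\|_\opnorm$). Then
\[
\|\bm\Omega_{2n}\|_\opnorm\leq\Lambda^{2n}\|\bm\mu(\mathbb R)\|_\opnorm=\Lambda^{2n}\|\bm M_0\|_\opnorm .
\]
Hence $\|\bm\Omega_{2n}\|_\opnorm^{-1/(2n)}\geq\Lambda^{-1}\|\bm M_0\|_\opnorm^{-1/(2n)}$, and the right-hand side tends to $\Lambda^{-1}>0$. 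The degenerate cases need separate handling: if $\Lambda=0$, or if some $\bm\Omega_{2n}=0$, those terms are $+\infty$ by the stated convention. In every case the terms do not tend to zero, so the Carleman series~\eqref{eq:carleman} diverges.

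\emph{Step 2: scalar determinacy.} Fix $\xi$ and set $\mu_\xi=\langle\!\langle\xi,\bm\mu\,\xi\rangle\!\rangle$. This is a positive scalar measure with even moments $m_{2n}=\langle\!\langle\xi,\bm M_{2n}\xi\rangle\!\rangle\leq\|\xi\|^2\|\bm\Omega_{2n}\|_\opnorm$. Therefore $m_{2n}^{-1/(2n)}\geq\|\xi\|^{-1/n}\|\bm\Omega_{2n}\|_\opnorm^{-1/(2n)}$, whose first factor tends to $1$, so the scalar Carleman series diverges as well. The classical Carleman theorem then gives determinacy of each $\mu_\xi$. Even more directly, any positive measure with moments bounded by $C\Lambda^{2n}$ must be supported in $[-\Lambda,\Lambda]$, where polynomials are dense by Weierstrass, so determinacy also follows without citing Carleman.

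\emph{Step 3: lift to the matrix measure.} Apply polarization to the sesquilinear forms $\xi,\eta\mapsto\langle\!\langle\eta,\bm\mu\,\xi\rangle\!\rangle$. Each off-diagonal entry is a finite combination of diagonal measures $\mu_{\xi+i^k\eta}$. Any positive matrix measure with the same moments therefore has identical diagonal measures, hence identical entries, so $\bm\mu$ is unique.

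\emph{Main obstacle.} The delicate step is bookkeeping rather than analysis. I must check that the sign correction $(-1)^n$ really matches the $(i\LL)^n$ convention with the chosen inner product, so that the $\bm M_{2n}$ are positive moments of a single measure. Only then does the bound on $\|\bm\Omega_{2n}\|$ transfer to the positive scalar moments. I would verify this first by diagonalizing the bounded self-adjoint operator on the relevant sector.
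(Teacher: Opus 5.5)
Your proposal is correct and follows essentially the same route as the paper. You bound $\|\bm\Omega_{2n}\|_\opnorm$ by a constant times $\Lambda^{2n}$ using compact support, so the Carleman terms have a positive lower limit; you get determinacy of each diagonal scalar problem from compact support and polynomial density; and you recover the full matrix measure by polarization.

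Your version is slightly more careful than the paper's sketch in two places. First, you use Loewner monotonicity to obtain the explicit constant $\|\bm M_0\|_\opnorm$ in place of a norm-equivalence constant. Second, you observe that any competing positive measure with the same moments must itself be confined to $[-\Lambda,\Lambda]$. That observation is what makes ``polynomial density'' give determinacy among all positive measures, not just among compactly supported ones.
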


\begin{proof}
If the supporting interval is contained in $[-R,R]$, finite-dimensional norm equivalence gives $\|\bm\Omega_{2n}\|\leq C R^{2n}$ with $C$ independent of $n$. A zero even moment gives an infinite Carleman term; otherwise $\|\bm\Omega_{2n}\|^{-1/(2n)}\geq C^{-1/(2n)}R^{-1}$, whose positive lower limit forces divergence. Compact support makes each diagonal scalar moment problem determinate by polynomial density, and complex polarization then fixes the full finite-dimensional matrix-valued measure~\cite{akhiezer1965}. This argument uses bounded finite-dimensional propagation and does not extend automatically to an unbounded bath.
\end{proof}

\begin{remark}[Asymptotic criterion vs.\ finite moments]
\label{rem:carleman-scope}
Condition~\eqref{eq:carleman} is asymptotic, whereas a reconstruction uses finitely many moments. For an unbounded continuum bath one must first identify the positive projected measure, prove finiteness of its moments, and establish the norm comparison needed to pass from scalar to operator moments. The raw Drude--Lorentz density, for example, has a $1/\omega$ ultraviolet tail and does not itself supply all ordinary moments. The formalization records the continuum transfer only as a conditional implication with precisely these open inputs.
\end{remark}

\section{Scope and Numerical Illustrations}
\label{sec:boundaries}

The calculation below is a shifted-line, distributional KK consistency check for an extracted finite Jaynes--Cummings kernel. It tests the matrix reconstruction and boundary-sign convention at fixed truncation and regularization. The separate projected-generator scans in Appendix~\ref{app:qlq-spectrum} test only whether the real-spectrum part of the projected-spectral hypothesis holds on the recorded finite grids.

\subsection{Shifted operator-valued KK consistency in the Jaynes--Cummings model}

We now test the shifted boundary identity for an extracted NZ kernel rather than for a bath correlator.
For the parameters used here ($N_{\max}=10$, $g=0.3$, $\omega_0=\omega_c=1$), direct finite-matrix reconstruction gives a real spectrum for $\Qproj\LL\Qproj$ to numerical precision. This checks only the real-spectrum component of the hypothesis. The finite Fock-space truncation has a discrete singular measure and therefore does not satisfy the absolutely continuous $L^p$ assumption of Theorem~\ref{thm:hardy}. The calculation tests the corresponding shifted, distributional boundary identity.

We first extract the physical convolution coefficient $\KK_{\mathrm{phys}}(t)=-i\KK(t)$ as a $4\times4$ matrix by multi-initial-state Volterra deconvolution, and then return to the dispersive convention through $\KK(t)=i\KK_{\mathrm{phys}}(t)$.
For the four factorized initial states $|e\rangle\langle e|$, $|g\rangle\langle g|$, $|+\rangle\langle+|$, and $|+i\rangle\langle+i|$, each tensored with $|0\rangle\langle0|$, define $v^{(j)}(t)=\operatorname{vec}\sigma^{(j)}(t)$ using column stacking and
\[
 V(t)=\bigl[v^{(1)}(t)\ \cdots\ v^{(4)}(t)\bigr],
 \qquad V_0=V(0).
\]
The matrix $V_0$ has rank four and $2$-norm condition number $3.23$. With $F(t)=\dot V(t)+i\LL_sV(t)$, the physical Volterra equation is
\[
 F(t)=\int_0^t\KK_{\mathrm{phys}}(\tau)V(t-\tau)\,d\tau.
\]
On the uniform grid $t_n=n\Delta t$, the recorded implementation evaluates $\dot V$ by centered finite differences in the interior and one-sided differences at the endpoints, and uses the rectangular convolution rule
\[
 F_n=\Delta t\sum_{m=0}^{n}\KK_{\mathrm{phys},m}V_{n-m}.
\]
Thus the step-$n$ value is obtained recursively from
\[
 \KK_{\mathrm{phys}}(t_n)=\left[\frac{F(t_n)}{\Delta t}
 -\sum_{m=0}^{n-1}\KK_{\mathrm{phys}}(t_m)V(t_{n-m})\right]V_0^{-1}.
\]
This definition fixes the derivative, quadrature, vectorization, and matrix ordering used in the extraction.

The infinite-time transform is approximated on the recorded window by
\[
 \Ktilde_{t_{\max}}(\omega+i\varepsilon)
 =\int_0^{t_{\max}}\KK(t)e^{i(\omega+i\varepsilon)t}\,dt,
 \qquad t_{\max}=60,
\]
with $\varepsilon=0.3$ damping the recurrent, non-decaying JC kernel. Thus the calculation below includes finite-window truncation error; it does not claim that the tail from $t_{\max}$ to infinity was computed.
Figure~\ref{fig:matrix-kk} records the time-domain norm, elementwise residuals, operator-norm residual, and coherence-channel comparison. The active coherence channel has a $0.113\%$ relative residual in the displayed $N_t=4000$ run; channels with numerically zero reference weight are reported only as absolute diagnostics.

On the frequency window $\Omega=(0.05,3)$, write $F(\omega)=\operatorname{Re}_{\mathcal E}\Ktilde_{t_{\max}}(\omega+i\varepsilon)$ and $G(\omega)=\mathcal H[\operatorname{Im}_{\mathcal E}\Ktilde_{t_{\max}}(\,\cdot\,+i\varepsilon)](\omega)$. The reported integrated relative residual is
\[
 \mathcal R_{L^2,\mathrm{op}}=\frac{\|F-G\|_{L^2(\Omega;\mathrm{op})}}{\|F\|_{L^2(\Omega;\mathrm{op})}},
 \qquad \|X\|_{L^2(\Omega;\mathrm{op})}^2:=\int_\Omega\|X(\omega)\|_\opnorm^2\,d\omega,
\]
with both norms evaluated by the same uniform FFT-frequency sum. At $N_t=8000$ the finest measured value, $0.024\%$, aggregates contributions from the Volterra inversion, the finite time grid, and the circular FFT-based Hilbert transform used for the KK check. It decreases monotonically under grid refinement for $N_t=2000,4000,8000$,
\[
0.113\%\ \to\ 0.035\%\ \to\ 0.024\%,
\]
with the coherence-channel residual $\Ktilde_{eg,eg}$ decreasing in parallel,
\[
0.459\%\ \to\ 0.113\%\ \to\ 0.028\%.
\]
This grid-refinement trend is measured at fixed $N_{\max}$, $t_{\max}$, frequency window, and Laplace shift. It supports numerical consistency with the shifted distributional identity; it is not a continuum-bath or thermodynamic-limit result.
Within this fixed finite-truncation setup, the extracted $4\times4$ kernel is consistent with the shifted operator-valued KK identity, and its projected generator satisfies the tested real-spectrum condition. The reported relative residuals are not interpreted for channels whose reference weight is numerically zero.

\begin{figure}[!tbp]
\centering
\includegraphics[width=\textwidth]{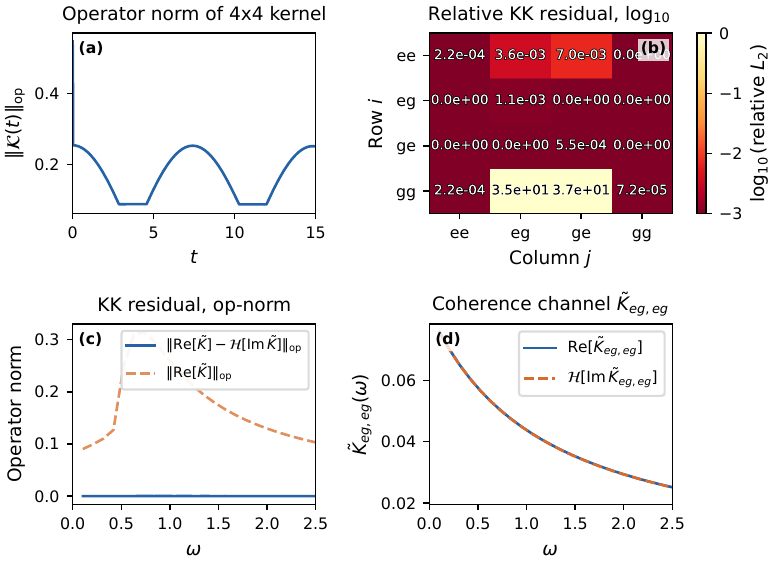}
\caption{Shifted-line operator-valued KK consistency check for the finite JC model ($\omega_0 = \omega_c = 1$, $g = 0.3$, $N_{\max} = 10$, $N_t=4000$, $t_{\max}=60$, $\varepsilon=0.3$).
(a)~Operator norm $\|\KK(t)\|_\mathrm{op}$ of the extracted 4$\times$4 NZ memory kernel.
(b)~Elementwise relative KK residual ($\log_{10}$ scale) across all 16 matrix elements.
(c)~Operator-norm KK residual $\|\operatorname{Re}_{\mathcal E}\Ktilde - \mathcal{H}[\operatorname{Im}_{\mathcal E}\Ktilde]\|_\mathrm{op}$ (blue) vs.\ $\|\operatorname{Re}_{\mathcal E}\Ktilde\|_\mathrm{op}$ (red dashed), with real and imaginary parts taken entrywise in the fixed basis $\mathcal E$.
(d)~Coherence-channel comparison: $\operatorname{Re}[\Ktilde_{eg,eg}(\omega)]$ (blue) and $\mathcal{H}[\operatorname{Im}\,\Ktilde_{eg,eg}](\omega)$ (red dashed).}
\label{fig:matrix-kk}
\end{figure}

\section{Machine-Checked Formalization}
\label{sec:lean-certification}

Formal verification is becoming an essential tool in mathematical physics, offering machine-checked certainty for complex analytic arguments. As formalized libraries grow and AI-assisted proving matures, computer-verified proofs are poised to become standard practice in the field. We therefore formalize the paper-specific deductions in Lean~4~\cite{lean4,mathlib}, providing independent verification of the analytic chain from explicitly named classical inputs.

Table~\ref{tab:lean-status} in Appendix~\ref{app:lean} records the machine-checked boundary of this development. The code verifies the finite-dimensional operator assembly, the Rouch\'e and Vieta deductions, the residue threshold, finite-truncation Carleman determinacy, and the non-pole $N_{\max}=1$ Jaynes--Cummings algebra. The formalization certifies that the stated implications follow from their hypotheses; the physical identification of spectral measures and the numerical calculations remain external inputs. The audit confirms that all proofs are complete and rest only on standard logical foundations.

\section{Discussion and Outlook}
\label{sec:discussion}

The main outcome is an explicit separation of three notions that are often conflated: retarded support, upper-half-plane holomorphy, and Hardy boundary control. Under explicit projected-spectral hypotheses this separation yields the operator-valued KK boundary identity, while the rational-pencil analysis supplies clearing and non-cancellation tests that turn a candidate kernel singularity into a genuine UHP propagator pole. The CPTP obstruction then applies only when that rational resolvent agrees with the exact reduced-family transform on the required nonempty open set. The main Hardy theorem is a sufficient-condition result: strict membership requires an absolutely continuous projected weight in $L^1\cap L^{p_0}$, and the $H^1$ boundary theorem is a separate endpoint statement. These distinctions matter in finite truncations, where the Jaynes--Cummings calculation tests a shifted distributional identity rather than strict $H^p$ placement.

The state and force-fit transforms have different protections. Microscopic unitarity makes $\tilde\sigma$ holomorphic for factorized and correlated initial states, but it does not make the transform zero-free. The high-precision Jaynes--Cummings zeros therefore remain in the state-transform layer; the present numerics do not activate the force-fit or CPTP mechanism. If a genuine rational UHP propagator pole survives the clearing factors, Theorem~\ref{thm:cp-hardy} gives the CPTP obstruction through the local count or the global imaginary-part budget.

The most immediate mathematical extension is to verify the projected spectral hypothesis in nontrivial continuum models rather than infer it from a bath label. For reconstruction theory, the important missing ingredients are stable multi-trajectory operator-kernel extraction and explicit non-cancellation tests. In the continuum Carleman problem, the projected positive measure and the scalar-to-operator norm comparison must be established before determinacy can be claimed. Multi-time NZ kernels have been developed for multitime correlation functions~\cite{breuer2017}. Extending the present result to that setting would require genuinely multivariable boundary theory rather than a formal repetition of the one-frequency argument.

\backmatter

\section*{Declarations}

\begin{itemize}
\item \textbf{Funding.} This work was supported by the National High-Level Overseas Talent Program (KS21400126), the Suzhou Talent project (ZXP2025057), the Jiangsu Distinguished Professorship Fund (SR21400225), and the Research Start-up Fund (NH21400525).
\item \textbf{Conflict of interest.} The author declares no competing interests.
\item \textbf{Data availability.} The numerical code, data, and Lean 4 sources used for this version are included in the supplementary archive. A numerical reproduction archive is available at Zenodo, \url{https://doi.org/10.5281/zenodo.20773898}.
\item \textbf{Code availability.} Complete audited Lean 4 sources for this version are included in the supplementary archive; see also the public repository~\cite{leanformalizations}.
\item \textbf{Author contribution.} K.L. conceived and performed the research and wrote the manuscript.
\item \textbf{ORCID.} Kejun Liu: \href{https://orcid.org/0000-0003-1547-9280}{0000-0003-1547-9280}.
\end{itemize}

\begin{appendices}
\renewcommand{\theHequation}{\Alph{section}.\arabic{equation}}
\renewcommand{\theHfigure}{\Alph{section}.\arabic{figure}}
\renewcommand{\theHtable}{\Alph{section}.\arabic{table}}

\section{Supplement to Theorem~\ref{thm:cp-hardy}: implicit-function and Vieta viewpoints}
\label{app:cp-hardy}

Theorem~\ref{thm:cp-hardy} gives local and global criteria for propagator poles generated by a UHP kernel pole. Here we record the perturbative expansion underlying the local picture and the separate Vieta viewpoint.

\emph{Perturbative local viewpoint.}
Write $\Ktilde(z) = R/(z-z_0) + \Ktilde_\mathrm{reg}(z)$ and define $F(z) = (z-z_0)(z-\LL_s-\Ktilde_\mathrm{reg}(z))$.
If $A_0 = z_0-\LL_s-\Ktilde_\mathrm{reg}(z_0)$ is invertible, the zeros of the locally cleared determinant are obtained from
$\det[(z-z_0)A_0-R+O((z-z_0)^2)]=0$.
For a one-parameter residue $R=\epsilon R_0$, if $A_0^{-1}R_0$ has simple spectrum, analytic perturbation gives
\begin{equation}
\zeta_j = z_0 + \epsilon\mu_j + O(\epsilon^2),
\qquad
\mu_j\in \operatorname{spec}(A_0^{-1}R_0),
\end{equation}
so all $d$ roots remain in the upper half-plane for sufficiently small $|\epsilon|$ compared with $\operatorname{Im}z_0$. With repeated or defective eigenvalues, fractional-power root splitting can occur; the Rouch\'e count in Part~(i) avoids any such perturbative assumption.

\emph{Vieta (global) viewpoint.}
For the rational kernel $\Ktilde(z)=R/(z-z_0)+\mathsf P(z)/q(z)$, the propagator poles satisfy $\det M(z)=0$ with $M(z)=(z-z_0)(zq(z)-q(z)\LL_s-\mathsf P(z))-R\,q(z)$.
The sum of all zeros of the scalar polynomial $\det M(z)$ equals $d(z_0-q_{m-1})+\operatorname{Tr}(\LL_s)$, with $q_{m-1}=0$ when $m=0$.
Under the explicit conditions $q(z_0)\neq0$, $\det R\neq0$, $\det \mathsf P(\alpha)\neq0$ at every root $\alpha$ of $q$, and $\operatorname{adj}M(\xi)\neq0$ at every UHP zero $\xi$ of $\det M$, the positive contribution $d\,\operatorname{Im}z_0$ forces at least one propagator pole in $\HH$.

\emph{No-cancellation anchor on a two-dimensional invariant block ($d=2$, $m=0$).}
To see that these hypotheses are non-vacuous, restrict the pencil to a two-dimensional invariant Liouville block; this is not the full four-dimensional Liouville space of a qubit. Take $\LL_s=\operatorname{diag}(\omega,-\omega)$ with $\omega>0$, set $\Ktilde_\mathrm{reg}=0$ (so $q\equiv1$ and $m=0$), and choose $R=\varrho I_2$ with $\varrho>0$ at $z_0=i\gamma$, $\gamma>0$.
Then
\[
M(z)=(z-i\gamma)(zI-\LL_s)-\varrho I,
\qquad
\det M(z)=m_+(z)m_-(z),
\]
where $m_\pm(z)=(z-i\gamma)(z\mp\omega)-\varrho$. The four zeros are
\[
z_{s}^{\pm}=\tfrac12\left(s\omega+i\gamma
 \pm\sqrt{(s\omega-i\gamma)^2+4\varrho}\right),
\qquad s\in\{+1,-1\}.
\]
The discriminants are nonzero because their imaginary parts are $\mp2\omega\gamma$, so the two roots of each factor are simple. The factors have no common zero: $m_-(z)-m_+(z)=2\omega(z-i\gamma)$, while $m_\pm(i\gamma)=-\varrho\neq0$. Hence all four zeros are distinct. At a zero of one factor the other factor is nonzero, so $\operatorname{adj}M$ is nonzero; moreover no zero equals $z_0$. Thus the scalar clearing factor $(z-z_0)$ and the adjugate cancel none of these zeros. Each pair has imaginary-part sum $\gamma$, and therefore at least one zero from each factor lies in $\HH$. This explicit block example verifies the no-cancellation hypotheses directly; nearby pencils must be checked on their own terms.

\section{Machine-checked status of the main-text statements}
\label{app:lean}

The pinned environment is Lean \texttt{v4.32.0}, mathlib commit \texttt{81a5d257c8e410db227a6665ed08f64fea08e997}, and the isolated Carleson source snapshot \texttt{c82b9f685d}. In the public source-tree layout, run \texttt{lake build Hpaper} and \texttt{lake build Hpaper.AxiomAudit} from the directory containing \texttt{lakefile.toml}.

The formalization was developed through a human-AI collaborative workflow: the core proof architecture and mathematical content were written by the author, with the Kimi K3 LLM model used for debugging, tactic suggestion, and audit assistance. All proofs were subsequently reviewed and verified by the author, who takes full responsibility for the correctness and completeness of the formalization. The audit confirms that all proofs are complete and rest only on standard logical foundations.

\subsection{Formalization architecture}

The Lean development is organized as a hierarchy of modules, each corresponding to a section of the manuscript. The root module \path{Hpaper.lean} imports all submodules and exposes the public interface. The dependency structure ensures that foundational operator-class definitions precede the analytic theorems that consume them.

\begin{itemize}
\item \path{Hpaper.NZDynamics}: Nakajima--Zwanzig projection, GQME derivation, and the explicit operator-class data structure. This module defines the interface between the physical setup and the formal verification, encoding the partial trace, reference-state embedding, and evolution-family hypotheses as structure fields rather than axioms.

\item \path{Hpaper.CausalityTransfer}: Retarded support, spectral decay via Riemann--Lebesgue, and the Cauchy-transform representation. The proof transport from scalar to operator-valued settings is handled here through finite-dimensional componentwise lifting.

\item \path{Hpaper.KKTerminalAssemblyL2}, \path{Hpaper.HardyHalfPlanePoissonLp}, and \path{Hpaper.HardyHalfPlaneFiniteMatrix}: The $L^1\cap L^{p_0}\Rightarrow H^p$ theorem. These modules contain the scalar Cauchy-transform estimate as a named classical input and its finite-dimensional assembly through basis expansion and norm equivalence.

\item \path{Hpaper.KKOperatorLiftCited}, \path{Hpaper.KKAssemblyL2}, and \path{Hpaper.KKTerminalAssemblyLp}: Operator-valued Kramers--Kronig boundary identities, including the $H^1$ endpoint and the $L^p$/subtracted variants. The componentwise lift from scalar Hardy projection theorems is verified explicitly.

\item \path{Hpaper.CPHardyActualPoleObstruction}, \path{Hpaper.CPHardyRationalPencil}, and \path{Hpaper.CPHardyNoCancellation}: The CPTP obstruction theorem, including the local Rouch\'e count and the global Vieta budget. The no-cancellation hypotheses are encoded as separate invertibility assumptions, making the logical dependencies explicit.

\item \path{Hpaper.FiniteCarleman}: Moment determinacy for bounded finite truncations, using the compact-support Hamburger criterion as the named classical input.

\item \path{Hpaper.JaynesCummingsNMaxOne}: Closed-form zero placement for the $N_{\max}=1$ baseline, including the removable-factor analysis for degenerate cases.

\item \path{Hpaper.AxiomAudit}: Post-hoc verification that no custom axioms or proof placeholders remain in the audited declarations.
\end{itemize}

\subsection{Proof engineering decisions}

Several design choices distinguish this formalization from a direct transcription of the manuscript proofs:

\begin{enumerate}
\item \textbf{Explicit hypothesis structures}: Rather than assuming unbounded-generator regularity abstractly, the development packages the required differentiability and continuity properties into the \path{OperatorClassData} structure. This makes the logical dependencies auditable: the Hardy theorem does not secretly invoke Stone's theorem, but rather requires the explicit all-vector interface that the user must supply or verify separately.

\item \textbf{Finite-dimensional reduction}: The operator-valued Hardy placement is proved by componentwise reduction to scalar estimates, not by invoking infinite-dimensional singular-integral theory. This matches the manuscript's finite Liouville-space setting and avoids the UMD machinery that would be needed for general Banach spaces.

\item \textbf{Classical inputs as parameters}: Theorems that rely on external results (scalar Cauchy-transform bounds, Riemann--Lebesgue, Hamburger determinacy) take these as named hypotheses rather than citing them axiomatically. This clarifies the evidence boundary: the formalization certifies the deduction from these inputs, not the inputs themselves.

\item \textbf{No-cancellation interfaces}: The CPTP obstruction theorem separates the algebraic zero count from the analytic pole verification. The adjugate non-cancellation check is a separate hypothesis, allowing the formalization to verify the algebraic budget independently of the analytic continuation arguments.
\end{enumerate}

\subsection{Audit results}

The \path{Hpaper.AxiomAudit} module performs a syntactic scan of all audited declarations, checking for \texttt{sorry}, \texttt{admit}, or custom axioms. The audit covers 2022 declarations and confirms that only the standard logical foundations are used. The audit is re-run automatically on every build, ensuring that subsequent modifications cannot introduce unverified assumptions without detection.

{\footnotesize
\begin{longtable}{@{}p{0.13\textwidth}p{0.26\textwidth}p{0.33\textwidth}p{0.07\textwidth}@{}}
\caption{Machine-check boundary for the manuscript. ``Checked'' always means checked from the hypotheses printed in the corresponding row. Evidence classes are P (paper proof), S (cited standard input), C (conditional), N (numerical), and O (open).}
\label{tab:lean-status}\\
\toprule
Manuscript item & Lean status & External input or limitation & Class \\
\midrule
\endfirsthead
\multicolumn{4}{c}%
{{\bfseries \tablename\ \thetable{} -- continued from previous page}} \\
\toprule
Manuscript item & Lean status & External input or limitation & Class \\
\midrule
\endhead
\midrule \multicolumn{4}{r}{{Continued on next page}} \\ \midrule
\endfoot
\bottomrule
\endlastfoot
Prop.~\ref{prop:lti} & Checked by \path{prop_lti_of_operatorClassData}; finite matrices by \path{prop_lti_finiteMatrix}. & The partial-trace/retraction maps, all-vector pointwise derivative laws for the joint and compressed evolution families, compressed strong continuity, bounded coupling blocks, and Duhamel interface are explicit hypotheses. They are automatic in the finite-matrix instance, but are not consequences of Stone's theorem alone. & S+P \\
Prop.~\ref{thm:causality} & Retarded support and spectral consequences checked in \path{CausalityTransfer}. & Riemann--Lebesgue is the named classical input; support refers to the explicit zero-past extension. & S+P \\
Thm.~\ref{thm:hardy} & Full $L^1\cap L^{p_0}\Rightarrow H^p$ range checked by \path{thm_hardy_matrix_L1Lp0_unconditional}. & Scalar Cauchy-transform/Hardy estimate is the named classical input; finite-dimensional Liouville space and $p=1$ exclusion remain explicit. & S+P \\
Thm.~\ref{thm:kk} & Operator lift checked by \path{thm_kk_matrix_H1_of_scalarH1KK}. & Scalar Hardy projection and scalar $H^1$ KK are named classical inputs. & S+P \\
Cor.~\ref{cor:subtracted} & The $L^p$ lift and subtraction chain are checked by \path{thm_kk_matrix_Lp_of_scalarHardyProjection}, \path{SubtractedKK}, and \path{cor_subtracted_Lp_unconditional}. & Scalar Hardy projection and Hilbert-transform/PV theorems are named classical inputs; boundary identities retain their a.e. and principal-value qualifications. & S+P \\
Thm.~\ref{thm:cp-hardy} & Local count, Vieta budget, no-cancellation interfaces, and pole obstruction checked. & Every invertibility, rationality, reality, and no-cancellation hypothesis remains explicit. & P \\
Cor.~\ref{cor:cp-hardy-explicit} & Checked in \path{CPHardyExplicitBound}. & Uses the corrected factor-four threshold $\|R\|<\operatorname{Im}z_0/(4N)$. & P \\
Prop.~\ref{thm:passivity} & Sign and Hardy consequences checked from the stated positivity and spectral hypotheses. & Does not prove positivity of the projected weight for an arbitrary passive bath. & C+P \\
Thm.~\ref{thm:carleman} & Finite bounded case checked in \path{FiniteCarleman}. & Compact-support moment determinacy is the named classical input; no unconditional continuum-bath transfer. & S+P \\
Prop.~\ref{thm:robustness} & Bounded generators checked directly; general case checked from \path{StoneUnitaryGroupData}. & Stone's theorem and infinite-dimensional trace-class interfaces are named classical inputs. & S+P \\
Prop.~\ref{prop:sigma-zeros} & Closed forms and non-pole zero equivalences checked in \path{JaynesCummingsNMaxOne} and \path{SigmaZeroPlacement}. & The printed proof separately removes the $g=0$ and $p_0=0$ cancellation artifacts; higher-truncation roots are numerical. & P+N \\
Modified KK formula & Single-pole residue algebra checked in \path{ModifiedKKResidue} and \path{ForceFitPerturbation}. & The full function-space perturbative transfer is supplied in the manuscript proof. & P+C \\
Numerical and continuum claims & Outside numbered theorem-prover scope. & QLQ spectra, extracted kernels, higher roots, and continuum Carleman inputs retain numerical or conditional status. & N/O \\
\end{longtable}
}

\section{Numerical tests of the projected-generator spectrum}
\label{app:qlq-spectrum}

This appendix reports the numerical procedure used to test whether the projected generator $\Qproj\LL\Qproj$ has purely real spectrum for the standard Nakajima--Zwanzig partial-trace projection.
The tests address only the real-spectrum component of Theorem~\ref{thm:hardy}. They do not establish absolute continuity or the $L^p$ bounds on the projected weight.

\subsection{Matrix representation of the Liouvillian and the NZ projection}

For a finite-dimensional total Hilbert space $\mathcal{H} = \mathcal{H}_s \otimes \mathcal{H}_b$ with $\dim\mathcal{H}_s = d_s$ and $\dim\mathcal{H}_b = d_b$, set $D=d_s d_b$. The Hamiltonian is a $D\times D$ Hermitian matrix.
The Liouvillian superoperator $\LL\rho = [\hat{H},\rho]$ is represented as a $D^2\times D^2$ matrix in the vectorization (column-stacking) basis:
\begin{equation}
\label{eq:liouvillian-matrix}
L_{\text{mat}} = \mathbbm{1}_D \otimes \hat{H} - \hat{H}^{T} \otimes \mathbbm{1}_D .
\end{equation}
Cyclicality of trace guarantees $L_{\text{mat}}^{\dagger} = L_{\text{mat}}$ in the Hilbert--Schmidt inner product.

The standard NZ projection superoperator $\Proj$ acts as $(\Proj\rho)_{\alpha n, \beta m} = \rho_{b,nm}^{\text{ref}} \sum_k \rho_{\alpha k, \beta k}$.
In the same vectorized basis its matrix elements are
\begin{equation}
\label{eq:nz-projection-matrix}
P_{(\alpha n)(\beta m), (\alpha' k)(\beta' k')}
= \delta_{\alpha\alpha'}\delta_{\beta\beta'}\delta_{kk'}\,\rho_{b,nm}^{\text{ref}} ,
\end{equation}
where Greek indices label system states ($1\leq\alpha,\beta\leq d_s$) and Latin indices label bath states ($1\leq n,m,k\leq d_b$).
This projection satisfies $P^2 = P$, but $P \neq P^{\dagger}$ unless $\rho_b^{\text{ref}} \propto \mathbbm{1}_{d_b}$.
Its complement is $Q = \mathbbm{1}_{D^2} - P$.

The projected generator is then the $D^2\times D^2$ matrix
\begin{equation}
\label{eq:qlq-matrix}
\Qproj\LL\Qproj \;\longleftrightarrow\; Q_{\text{mat}} \, L_{\text{mat}} \, Q_{\text{mat}} .
\end{equation}
Because $Q_{\text{mat}} \neq Q_{\text{mat}}^{\dagger}$ in general, this matrix need not be Hermitian even though $L_{\text{mat}}$ is.

\subsection{Algorithm}

The numerical test proceeds as follows.
\begin{enumerate}
\item Build $\hat{H}$ for the chosen model and truncation.
\item Construct $L_{\text{mat}}$ via Eq.~\eqref{eq:liouvillian-matrix}.
\item Choose a bath reference state $\rho_b^{\text{ref}}$ (vacuum, thermal, or other).
\item Construct $P_{\text{mat}}$ via Eq.~\eqref{eq:nz-projection-matrix} and verify $P_{\text{mat}}^2 = P_{\text{mat}}$ to machine precision.
\item Form $Q_{\text{mat}} = \mathbbm{1}_{D^2} - P_{\text{mat}}$ and compute $M = Q_{\text{mat}} L_{\text{mat}} Q_{\text{mat}}$.
\item Report the non-orthogonality diagnostic $\|P_{\text{mat}} - P_{\text{mat}}^{\dagger}\|_F$; it is not a required pass/fail condition, since a maximally mixed reference can make the projection orthogonal.
\item Compute all eigenvalues $\{\lambda_j\}$ of $M$ using standard dense eigensolvers.
\item Report $\max_j |\operatorname{Im}(\lambda_j)|$.
\end{enumerate}
A threshold of $10^{-10}$ is used to distinguish genuine complex eigenvalues from machine-precision noise. The Jaynes--Cummings scan reaches $D=42$ ($D^2=1764$), while the spin--boson scan reaches $D=54$ ($D^2=2916$). For the matrices whose spectra are numerically real, dense eigensolvers give imaginary parts below $10^{-14}$ up to the reported rounding.

\subsection{Jaynes--Cummings model: tested finite grid}

The JC Hamiltonian
\begin{equation}
\hat{H}_{\text{JC}} = \frac{\omega_0}{2}\sigma_z + \omega_c a^{\dagger}a + g\bigl(\sigma_+ a + \sigma_- a^{\dagger}\bigr)
\end{equation}
is truncated to $0 \leq n \leq N_{\max}$ Fock states, giving $D = 2(N_{\max}+1)$.

Table~\ref{tab:qlq-jc} summarizes the results.
The tracked scan contains 54 configurations: 25 resonant vacuum cases and 25 resonant thermal cases on the grid $N_{\max}=3,5,10,15,20$, $g=0.1,0.3,0.5,1.0,2.0$, together with four detuned thermal cases. On this finite grid, the eigenvalues of $\Qproj\LL\Qproj$ are real to numerical precision ($\max |\operatorname{Im}\lambda| \leq 2.4\times10^{-14}$).
At the same time, $\Qproj\LL\Qproj$ is manifestly non-Hermitian: $\|\Qproj\LL\Qproj - (\Qproj\LL\Qproj)^{\dagger}\|_F \sim 10^1$--$10^2$.
The scan does not establish a general real-spectrum theorem. Excitation-number conservation is a plausible source of the observed structure, but that explanation is not needed for the numerical claim.

\begin{table}[htbp]
\centering
\caption{Representative $Q\mathcal LQ$ spectra for the Jaynes--Cummings model with a vacuum reference state. The statement is limited to the tracked finite grid.}
\label{tab:qlq-jc}
\begin{tabular}{ccccc}
\toprule
$N_{\max}$ & $g$ & $D$ & $\|\Qproj\LL\Qproj-(\Qproj\LL\Qproj)^{\dagger}\|_F$ & $\max |\operatorname{Im}\lambda|$ \\
\midrule
3  & 0.1 & 8  & 6.98  & $5.4\times 10^{-26}$ \\
3  & 1.0 & 8  & 10.8  & $4.5\times 10^{-25}$ \\
3  & 2.0 & 8  & 17.9  & $6.3\times 10^{-25}$ \\
5  & 0.1 & 12 & 13.5  & $5.4\times 10^{-26}$ \\
5  & 1.0 & 12 & 18.4  & $7.5\times 10^{-16}$ \\
5  & 2.0 & 12 & 28.5  & $6.3\times 10^{-25}$ \\
10 & 0.1 & 22 & 35.1  & 0 \\
10 & 1.0 & 22 & 46.3  & $7.5\times 10^{-16}$ \\
10 & 2.0 & 22 & 70.0  & $6.5\times 10^{-16}$ \\
20 & 0.1 & 42 & 94.2  & 0 \\
20 & 1.0 & 42 & 111   & $5.3\times 10^{-15}$ \\
20 & 2.0 & 42 & 150   & $1.4\times 10^{-14}$ \\
\bottomrule
\end{tabular}
\end{table}

\subsection{Spin--boson finite-matrix counterexamples}

For the truncated spin--boson Hamiltonian
\begin{equation}
\hat{H}_{\text{SB}} = \frac{\Delta}{2}\sigma_x + \sum_{k=1}^{N_{\text{bath}}} \omega_k a_k^{\dagger}a_k + \sigma_z \sum_{k=1}^{N_{\text{bath}}} g_k (a_k + a_k^{\dagger}),
\end{equation}
the tracked dense scan uses $N_{\mathrm{bath}}=3$, $n_{\max}=2$, $\Delta=1$, nine values of the coupling parameter $\eta$, five temperatures, and three cutoffs, for 135 finite matrices in total. Of these, 107 have non-real eigenvalues above the numerical threshold and 28 are real to numerical precision; the largest observed $|\operatorname{Im}\lambda|$ is $1.06$. The raw scan is archived as the supplementary data file \texttt{qlq\_phase\_scan.json}.

This finite-grid result is sufficient for the logical point: real spectrum of $Q\mathcal LQ$ is not automatic for the physical partial-trace projection. It is not a phase diagram and is not used to infer an operator-valued spectral density in the continuum model.

\subsection{Implications}

These numerical tests support the conditional formulation adopted in the main text.
For the Jaynes--Cummings model used in Fig.~\ref{fig:matrix-kk}, the real-spectrum part of the spectral hypothesis is satisfied in the tested finite truncation; strict Hardy membership still requires the continuous-density assumption of Theorem~\ref{thm:hardy}.
The spin--boson scan supplies explicit finite-matrix cases where the real-spectrum condition fails, so Theorem~\ref{thm:hardy} does not apply there without additional structure.
This model-dependence underscores why the theorems are stated as \emph{sufficient conditions} rather than universal consequences of unitarity alone.

\end{appendices}

\clearpage
{\setlength{\bibsep}{0.6em}\bibliography{references}}


\begin{thebibliography}{17}
\ifx \bisbn   \undefined \def \bisbn  #1{ISBN #1}\fi
\ifx \binits  \undefined \def \binits#1{#1}\fi
\ifx \bauthor  \undefined \def \bauthor#1{#1}\fi
\ifx \batitle  \undefined \def \batitle#1{#1}\fi
\ifx \bjtitle  \undefined \def \bjtitle#1{#1}\fi
\ifx \bvolume  \undefined \def \bvolume#1{\textbf{#1}}\fi
\ifx \byear  \undefined \def \byear#1{#1}\fi
\ifx \bissue  \undefined \def \bissue#1{#1}\fi
\ifx \bfpage  \undefined \def \bfpage#1{#1}\fi
\ifx \blpage  \undefined \def \blpage #1{#1}\fi
\ifx \burl  \undefined \def \burl#1{\textsf{#1}}\fi
\ifx \doiurl  \undefined \def \doiurl#1{\url{https://doi.org/#1}}\fi
\ifx \betal  \undefined \def \betal{\textit{et al.}}\fi
\ifx \binstitute  \undefined \def \binstitute#1{#1}\fi
\ifx \binstitutionaled  \undefined \def \binstitutionaled#1{#1}\fi
\ifx \bctitle  \undefined \def \bctitle#1{#1}\fi
\ifx \beditor  \undefined \def \beditor#1{#1}\fi
\ifx \bpublisher  \undefined \def \bpublisher#1{#1}\fi
\ifx \bbtitle  \undefined \def \bbtitle#1{#1}\fi
\ifx \bedition  \undefined \def \bedition#1{#1}\fi
\ifx \bseriesno  \undefined \def \bseriesno#1{#1}\fi
\ifx \blocation  \undefined \def \blocation#1{#1}\fi
\ifx \bsertitle  \undefined \def \bsertitle#1{#1}\fi
\ifx \bsnm \undefined \def \bsnm#1{#1}\fi
\ifx \bsuffix \undefined \def \bsuffix#1{#1}\fi
\ifx \bparticle \undefined \def \bparticle#1{#1}\fi
\ifx \barticle \undefined \def \barticle#1{#1}\fi
\bibcommenthead
\ifx \bconfdate \undefined \def \bconfdate #1{#1}\fi
\ifx \botherref \undefined \def \botherref #1{#1}\fi
\ifx \url \undefined \def \url#1{\textsf{#1}}\fi
\ifx \bchapter \undefined \def \bchapter#1{#1}\fi
\ifx \bbook \undefined \def \bbook#1{#1}\fi
\ifx \bcomment \undefined \def \bcomment#1{#1}\fi
\ifx \oauthor \undefined \def \oauthor#1{#1}\fi
\ifx \citeauthoryear \undefined \def \citeauthoryear#1{#1}\fi
\ifx \endbibitem  \undefined \def \endbibitem {}\fi
\ifx \bconflocation  \undefined \def \bconflocation#1{#1}\fi
\ifx \arxivurl  \undefined \def \arxivurl#1{\textsf{#1}}\fi
\csname PreBibitemsHook\endcsname

\bibitem[\protect\citeauthoryear{Nakajima}{1958}]{nakajima1958}
\begin{barticle}
\bauthor{\bsnm{Nakajima}, \binits{S.}}:
\batitle{On quantum theory of transport phenomena}.
\bjtitle{Prog. Theor. Phys.}
\bvolume{20},
\bfpage{948}--\blpage{959}
(\byear{1958})
\doiurl{10.1143/PTP.20.948}
\end{barticle}
\endbibitem

\bibitem[\protect\citeauthoryear{Zwanzig}{1960}]{zwanzig1960}
\begin{barticle}
\bauthor{\bsnm{Zwanzig}, \binits{R.}}:
\batitle{Ensemble method in the theory of irreversibility}.
\bjtitle{J. Chem. Phys.}
\bvolume{33},
\bfpage{1338}--\blpage{1341}
(\byear{1960})
\doiurl{10.1063/1.1731409}
\end{barticle}
\endbibitem

\bibitem[\protect\citeauthoryear{Mori}{1965}]{mori1965}
\begin{barticle}
\bauthor{\bsnm{Mori}, \binits{H.}}:
\batitle{Transport, collective motion, and brownian motion}.
\bjtitle{Prog. Theor. Phys.}
\bvolume{33},
\bfpage{423}--\blpage{455}
(\byear{1965})
\doiurl{10.1143/PTP.33.423}
\end{barticle}
\endbibitem

\bibitem[\protect\citeauthoryear{Kubo}{1957}]{kubo1957}
\begin{barticle}
\bauthor{\bsnm{Kubo}, \binits{R.}}:
\batitle{Statistical-mechanical theory of irreversible processes. {I}. general
  theory and simple applications to magnetic and conduction problems}.
\bjtitle{J. Phys. Soc. Jpn.}
\bvolume{12},
\bfpage{570}--\blpage{586}
(\byear{1957})
\doiurl{10.1143/JPSJ.12.570}
\end{barticle}
\endbibitem

\bibitem[\protect\citeauthoryear{Toll}{1956}]{toll1956}
\begin{barticle}
\bauthor{\bsnm{Toll}, \binits{J.S.}}:
\batitle{Causality and the dispersion relation: Logical foundations}.
\bjtitle{Phys. Rev.}
\bvolume{104},
\bfpage{1760}--\blpage{1770}
(\byear{1956})
\doiurl{10.1103/PhysRev.104.1760}
\end{barticle}
\endbibitem

\bibitem[\protect\citeauthoryear{King}{2009}]{king2009}
\begin{bbook}
\bauthor{\bsnm{King}, \binits{F.W.}}:
\bbtitle{Hilbert Transforms}
vol. \bseriesno{1--2}.
\bpublisher{Cambridge University Press},
\blocation{Cambridge}
(\byear{2009})
\end{bbook}
\endbibitem

\bibitem[\protect\citeauthoryear{Prevedelli et~al.}{2024}]{perinelli2024}
\begin{barticle}
\bauthor{\bsnm{Prevedelli}, \binits{M.}},
\bauthor{\bsnm{Perinelli}, \binits{A.}},
\bauthor{\bsnm{Ricci}, \binits{L.}}:
\batitle{{Kramers-Kronig} relations via {Laplace} formalism and $l^1$
  integrability}.
\bjtitle{Am. J. Phys.}
\bvolume{92},
\bfpage{859}--\blpage{863}
(\byear{2024})
\doiurl{10.1119/5.0217609}
\end{barticle}
\endbibitem

\bibitem[\protect\citeauthoryear{Vacchini and Breuer}{2010}]{vacchini2010exact}
\begin{barticle}
\bauthor{\bsnm{Vacchini}, \binits{B.}},
\bauthor{\bsnm{Breuer}, \binits{H.-P.}}:
\batitle{Exact master equations for the non-{Markovian} decay of a qubit}.
\bjtitle{Phys. Rev. A}
\bvolume{81},
\bfpage{042103}
(\byear{2010})
\doiurl{10.1103/PhysRevA.81.042103}
{\href{https://arxiv.org/abs/1002.2172}{{arXiv:1002.2172}}}
\end{barticle}
\endbibitem

\bibitem[\protect\citeauthoryear{Ivander et~al.}{2024}]{ivander2024}
\begin{barticle}
\bauthor{\bsnm{Ivander}, \binits{F.}},
\bauthor{\bsnm{Lindoy}, \binits{L.P.}},
\bauthor{\bsnm{Lee}, \binits{J.}}:
\batitle{Unified framework for open quantum dynamics with memory}.
\bjtitle{Nat. Commun.}
\bvolume{15},
\bfpage{8087}
(\byear{2024})
\doiurl{10.1038/s41467-024-52081-3}
\end{barticle}
\endbibitem

\bibitem[\protect\citeauthoryear{de~Moura and Ullrich}{2021}]{lean4}
\begin{bchapter}
\bauthor{\bsnm{Moura}, \binits{L.}},
\bauthor{\bsnm{Ullrich}, \binits{S.}}:
\bctitle{The {Lean} 4 theorem prover and programming language}.
In: \bbtitle{Automated Deduction -- {CADE} 28}.
\bsertitle{Lecture Notes in Computer Science},
vol. \bseriesno{12699},
pp. \bfpage{625}--\blpage{635}
(\byear{2021})
\end{bchapter}
\endbibitem

\bibitem[\protect\citeauthoryear{{The mathlib Community}}{2020}]{mathlib}
\begin{bchapter}
\bauthor{\bsnm{{The mathlib Community}}}:
\bctitle{The {Lean} mathematical library}.
In: \bbtitle{Proceedings of the 9th {ACM} {SIGPLAN} International Conference on
  Certified Programs and Proofs ({CPP} 2020)},
pp. \bfpage{367}--\blpage{381}
(\byear{2020})
\end{bchapter}
\endbibitem

\bibitem[\protect\citeauthoryear{Diestel and Uhl}{1977}]{diestel1977}
\begin{bbook}
\bauthor{\bsnm{Diestel}, \binits{J.}},
\bauthor{\bsnm{Uhl}, \binits{J.J.}}:
\bbtitle{Vector Measures}.
\bsertitle{Mathematical Surveys and Monographs},
vol. \bseriesno{15}.
\bpublisher{American Mathematical Society},
\blocation{Providence, RI}
(\byear{1977})
\end{bbook}
\endbibitem

\bibitem[\protect\citeauthoryear{Gesztesy and
  Tsekanovskii}{2000}]{gesztesy2000}
\begin{barticle}
\bauthor{\bsnm{Gesztesy}, \binits{F.}},
\bauthor{\bsnm{Tsekanovskii}, \binits{E.}}:
\batitle{On matrix-valued {Herglotz} functions}.
\bjtitle{Math. Nachr.}
\bvolume{218},
\bfpage{61}--\blpage{138}
(\byear{2000})
\doiurl{10.1002/1522-2616(200010)218:1<61::AID-MANA61>3.0.CO;2-D}
\end{barticle}
\endbibitem

\bibitem[\protect\citeauthoryear{Bi et~al.}{2026}]{liu2026mkct}
\begin{botherref}
\oauthor{\bsnm{Bi}, \binits{R.-H.}},
\oauthor{\bsnm{Liu}, \binits{W.}},
\oauthor{\bsnm{Dou}, \binits{W.}}:
Generalized quantum master equation from memory kernel coupling theory.
arXiv preprint
(2026)
{\href{https://arxiv.org/abs/2603.01458}{{arXiv:2603.01458}}}
\end{botherref}
\endbibitem

\bibitem[\protect\citeauthoryear{Akhiezer}{1965}]{akhiezer1965}
\begin{bbook}
\bauthor{\bsnm{Akhiezer}, \binits{N.I.}}:
\bbtitle{The Classical Moment Problem and Some Related Questions in Analysis}.
\bpublisher{Oliver and Boyd},
\blocation{Edinburgh}
(\byear{1965})
\end{bbook}
\endbibitem

\bibitem[\protect\citeauthoryear{Ivanov and Breuer}{2015}]{breuer2017}
\begin{barticle}
\bauthor{\bsnm{Ivanov}, \binits{A.}},
\bauthor{\bsnm{Breuer}, \binits{H.-P.}}:
\batitle{Extension of the {Nakajima-Zwanzig} approach to multitime correlation
  functions}.
\bjtitle{Phys. Rev. A}
\bvolume{92},
\bfpage{032113}
(\byear{2015})
\doiurl{10.1103/PhysRevA.92.032113}
\end{barticle}
\endbibitem

\bibitem[\protect\citeauthoryear{Liu}{2026}]{leanformalizations}
\begin{botherref}
\oauthor{\bsnm{Liu}, \binits{K.}}:
Causality and open quantum systems -- {Lean} 4 formalizations.
\url{https://github.com/KejunLiuGitHub/causality-open-quantum-systems-lean}
(2026)
\end{botherref}
\endbibitem

\end{thebibliography}

\end{document}